\newtheorem{theorem}{Theorem}
\newtheorem{lemma}{Lemma}
\newtheorem{remark}{Remark}
\begin{document}

\begin{frontmatter}

\title{Solution Representations of Solving Problems for the Black-Scholes equations and Application to the Pricing Options on Bond with Credit Risk}
\author{Hyong-Chol O ${}^{*}$, Dae-Song Kim, Dae-Song  Choe}
\ead{hc.o@ryongnamsan.edu.kp}
\address{Faculty of Mathematics, \textbf{Kim Il Sung} University, Pyongyang, Democratic People's Republic of Korea}

\begin{abstract}
In this paper is investigated the pricing problem of options on bonds with credit risk based on analysis on two kinds of solving problems for the Black-Scholes equations. First, a solution representation of the Black-Scholes equation with the maturity payoff function which is the product of the power function, normal distribution function and characteristic function is provided. Then a solution representation of a special terminal boundary value problem of the Black-Sholes equation is provided and its monotonicity is proved. The simplest case of the structural model of the credit bond is studied and its pricing formula is provided, and based on the results, the pricing model of option on corporate bond with credit risk is transformed into a terminal boundary value problem of the Black-Scholes equation with some special maturity payoff functions and the solution formula is obtained. Using it, we provide the pricing formulae of the puttable and callable bonds with credit risk.
\end{abstract}

\begin{keyword}
Black-Scholes equation; option on bond with credit risk; structural model; puttable bond; callable bond
\textit{MSC2020: } 35C15, 35K15, 35K20, 35Q91
\end{keyword}

\end{frontmatter}

\section{Introduction}
Solving problems for the Black-Scholes equation become pricing models of various financial derivatives according to the forms of maturity payoff functions, and it is important to obtain solution representations with complicated maturity payoff functions. \cite{Bu2004} provides solution representations of the terminal value problems of the Black-Scholes equation with the maturity payoff functions which are the products of power functions with  exponents 0 and 1, cumulative distribution functions of normal distribution and characteristic functions, and using them provides the pricing formulae of the exotic options with two expiry dates. \cite{OC2019} provides a solution representation of the terminal value problem of the Black-Scholes equation with the maturity payoff function which is the product of the power function with any real number exponent and cumulative distribution function of normal distribution, and using it obtains the pricing formulae of some the exotics including savings plans that provide a choice of indexing.

On the other hand, recent research has been carried out on pricing problems for complicated composite options such as bond options. \cite{{FNV2016}, {ZEW2017},  {ZW2012}, {ZYL}} studied the pricing problem of American bond options and \cite{{RF2016}, {RF2017}} studied the pricing coupon bond options as interest rate-derivatives under the Hull-White model. These are all the pricing problems of bond options without credit risk. \cite{OK2020} provides a partial differential equation model of the price of a Discrete Coupon Bond with Early redemption provision (with credit risk) and an analytical solution representation, their approach is to analyze together by adding an early-redemption provision to the pricing model of the discrete-coupon bonds of \cite{OJKJ2017} rather than from the viewpoint of the bond option.

The purpose of this paper is to investigate the pricing problem of options on bonds with credit risk based on analysis on two kinds of solving problems for the Black-Scholes equations. According to experts, the pricing problem of options on bonds with credit risk is a challenging problem. All corporate bonds with an early redemption provision may be considered as a bond option with credit risk. 

In this paper, we study a simple structural model of pricing option on bond with credit risk. 

We first provide a solution representation of the Black-Scholes equation with the maturity payoff function which is the product of the power function, normal distribution function and characteristic function. This solution representation includes all the solution representations given in \cite{{Bu2004}, {OK2013}, {OC2019}} as special cases. Then we provide a solution representation of a special terminal boundary value problem of the Black-Sholes equation and prove the monotonicity of its solution. Unlike the monotonicity results of the solution to the terminal value problem of the Black-Scholes equation (e.g., \cite{OJK2016} et al), the monotonicity of this solution to the terminal boundary value problem of the Black-Scholes equation, is not easily derived and is proved based on a rigorous estimates of the cumulative distribution function of normal distribution and its derivatives.

Then we provide the simplest case of the structural model of the straight credit bond (i.e. credit bond without early redemption provision) and its pricing formula and prove the monotonicity of its pricing function. Then using it, we transform the pricing model of option on corporate bond with credit risk to a terminal boundary value problem of the Black-Scholes equation with some special maturity payoff functions and obtain a solution formula. Using it, we provide the pricing formulae of the puttable and callable bonds with credit risk.

The rest of the paper is organized as follows. In Section 2, we provide  solution representations of two kinds of solving problems for the Black-Scholes equations arising from investigation of pricing options on credit bonds and study the monotonicity of the solutions. In Section 3 we provide a pricing formula of the simplest case of the structural model for pricing credit bond (a terminal boundary value problem of the Black-Scholes equation) and the pricing formulae of puttable and callable bonds with credit risk.

\section{Solving problems for the Black-Scholes equations and the monotonicity of the solutions}

\subsection{A solution representation of a terminal value problem of Black-Scholes equation with a special maturity payoff functions}

The following notation is used:

\[ \delta \left(\frac{X^{i} }{L} ,\; \tau _{1} ,\; \tau _{2} ,\; \tau _{3} \right)=\left[\ln \frac{X^{i} }{L} +\left(r-q-\frac{\sigma ^{2} }{2} \right)\tau _{1} +\sigma ^{2} \tau _{2} \right](\sigma \sqrt{\tau _{3} } )^{-1} (i \in \mathbb{R}).\]

\begin{theorem} For the terminal value problem
\begin{equation} \label{eq1} 
\frac{\partial V}{\partial t} +\frac{1}{2} \sigma ^{2} X^{2} \frac{\partial ^{2} V}{\partial X^{2} } +(r-q)X\frac{\partial V}{\partial X} -rV=0,\; \; 0<t<T,\; \; X>0 
\end{equation} 
\begin{equation} \label{eq2}
V(X,\; T)=X^{\beta } N[\delta (X^{i} /L,\; \; \tau _{1} ,\; \; \tau _{2} ,\; \; \tau _{3} )]\cdot 1\{X>K\}  
\end{equation} 
we have the following solution representation
\begin{equation} \label{eq3}
V(X,\; \tau ;\; \tau _{1} ,\; \; \tau _{2} ,\; \; \tau _{3} ,\; \; i,\; \; L,\; \; K)=X^{\beta } e^{\mu (\beta )\tau } N_{2} (a_{1} ,\; a_{2} ;\; \rho).               
\end{equation}
Here
\[N(x)=\frac{1}{\sqrt{2\pi } } \int _{-\infty }^{x}e^{-t^{2} /2} dt , \mu (\beta )=-r+\beta \left[r-q-\frac{\sigma ^{2} }{2} (1-\beta )\right], \]
\[\tau =T-t, \;\; \rho =i\sqrt{\frac{\tau }{\tau _{3} +i^{2} \tau } } ,\]
\[N_{2} (a_{1} ,\; a_{2} ;\; \rho)=\frac{1}{2\pi \sqrt{1-\rho ^{2}} } \int _{-\infty }^{\; a_{1} }\int _{-\infty }^{\; \; a_{2} }e^{-\frac{y_{1} {}^{2} -2\rho y_{1} y_{2} +y_{2} {}^{2} }{2(1-\rho ^{2} )} }  dy_{2} dy_{1}  \]
\[a_{2} =\pm \left[\ln \frac{X}{K} +\left(r-q-\frac{\sigma ^{2} }{2} \right)\tau +\sigma ^{2} \beta \tau \right](\sigma \sqrt{\tau } )^{-1} =\pm \delta (X/K,\; \; \tau ,\; \; \beta \tau ,\; \; \tau )\] 
\[\begin{array}{l} {a_{1} =\left[\ln \frac{X^{i} }{L} +\left(r-q-\frac{\sigma ^{2} }{2} \right)(\tau _{1} +i\tau )+\sigma ^{2} (\tau _{2} +i\beta \tau )\right](\sigma \sqrt{\tau _{3} +i^{2} \tau } )^{-1} =}\;\;\;\; \\ {\; \; \; \; =\delta (X^{i} /L,\; \; \tau _{1} +i\tau ,\; \; \tau _{2} +i\beta \tau ,\; \; \tau _{3} +i^{2} \tau )\; .} \end{array}\]
\end{theorem}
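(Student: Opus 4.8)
The plan is to reduce the Black-Scholes PDE to the heat equation via the standard logarithmic change of variables, then express the solution as a Gaussian convolution of the terminal payoff and evaluate the resulting double integral by completing the square. Concretely, I would set $x=\ln X$, $\tau=T-t$, and write $V=e^{\alpha x+\gamma\tau}u(x,\tau)$ for suitable constants $\alpha,\gamma$ chosen so that $u$ satisfies the pure heat equation $u_\tau=\tfrac12\sigma^2 u_{xx}$. The factor $X^\beta e^{\mu(\beta)\tau}$ appearing in \eqref{eq3} strongly suggests that the cleaner route is to absorb the power $X^\beta$ first: substituting $V(X,\tau)=X^\beta e^{\mu(\beta)\tau}W(X,\tau)$ into \eqref{eq1} yields, after a routine computation, a Black-Scholes-type equation for $W$ in which the drift coefficient $(r-q)$ is shifted to $(r-q+\beta\sigma^2)$ and the zeroth-order term is eliminated; the exponent $\mu(\beta)=-r+\beta[r-q-\tfrac{\sigma^2}{2}(1-\beta)]$ is precisely the constant that makes this work. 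This explains the shift by $\sigma^2\beta\tau$ visible in both $a_1$ and $a_2$.

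After this reduction, $W$ solves a drift-diffusion equation whose Green's function is the lognormal density, so $W(X,\tau)$ is the discounted (here undiscounted, since $r$ was absorbed) expectation under the shifted measure of the payoff $N[\delta(\cdot)]\cdot 1\{\cdot>K\}$. Writing the terminal value as a function of $\xi=\ln(X_T/X)$, which is Gaussian with mean $(r-q-\tfrac{\sigma^2}{2}+\beta\sigma^2)\tau$ and variance $\sigma^2\tau$, I would express $W(X,\tau)$ as a single integral over $\xi$. The indicator $1\{X_T>K\}$ becomes the condition $\xi>\ln(K/X_T)$, i.e. a one-sided restriction on the Gaussian variable, which after standardization produces the limit $a_2$ (with the sign determined by the orientation of the inequality). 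The remaining factor $N[\delta(X_T^i/L,\tau_1,\tau_2,\tau_3)]$ is itself a Gaussian CDF in which $X_T^i$ appears, and since $\ln X_T^i=i\ln X_T$ is linear in $\xi$, this factor is $N$ evaluated at an affine function of the Gaussian integration variable.

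The decisive step is then the classical identity expressing $\int N(\text{affine in }\xi)\,\phi(\xi)\,d\xi$ over a half-line as a bivariate normal probability $N_2(a_1,a_2;\rho)$. Writing $N$ itself as an integral introduces a second Gaussian variable, and the two integrations together become an integral of a bivariate Gaussian density over a quadrant; the correlation $\rho$ arises from the linear coupling through $\xi$. The key computation is to complete the square in the two-dimensional exponent and to track carefully how the mean-shift terms (the drift plus the $\beta\sigma^2$ correction) feed into the upper limits. I expect the main obstacle to be purely bookkeeping: verifying that the correlation coefficient comes out as $\rho=i\sqrt{\tau/(\tau_3+i^2\tau)}$ and that the first argument consolidates into $a_1=\delta(X^i/L,\tau_1+i\tau,\tau_2+i\beta\tau,\tau_3+i^2\tau)$ with the denominator $\sigma\sqrt{\tau_3+i^2\tau}$. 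The appearance of $\tau_3+i^2\tau$ under the root is the signature of adding the variance $i^2\sigma^2\tau$ contributed by $\ln X_T^i$ to the original $\sigma^2\tau_3$, and confirming this consolidation, together with the sign conventions on $a_2$, is where all the care is required; the rest is the standard reduction and Gaussian-convolution argument.
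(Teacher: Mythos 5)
Your proposal is correct and is essentially the paper's own argument: the paper likewise starts from the Gaussian-kernel (lognormal convolution) representation of the solution (citing it from an earlier work rather than re-deriving it), writes the normal CDF in the payoff as an inner Gaussian integral, and converts the resulting double integral into $N_{2}(a_{1},a_{2};\rho)$ — its explicit two-dimensional linear change of variables is the same computation as your completing-the-square/conditioning identity, producing the same $\rho=i\sqrt{\tau/(\tau_{3}+i^{2}\tau)}$ and the consolidated $a_{1}$. The only cosmetic difference is that you absorb $X^{\beta}e^{\mu(\beta)\tau}$ by a preliminary substitution, while the paper keeps $z^{\beta}$ inside the integral and absorbs it when completing the square.
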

\begin{proof} By the formula (3) of \cite{OK2013}, the solution of \eqref{eq1} and \eqref{eq2} is written as follows:
\[V(X,\; \; \tau)=\frac{e^{-r\tau } }{\sigma \sqrt{2\pi \tau } } \int _{0}^{+\infty }\frac{z^{\beta } }{z} \left[\frac{1}{\sqrt{2\pi } } \int _{-\infty }^{\delta (z^{i} /L)}e^{-\frac{y^{2} }{2} } dy \right]\cdot 1\{ z\ge K\} \cdot e^{-\frac{1}{2\sigma ^{2} \tau } \left[\ln \frac{X}{z} +(r-q-\sigma ^{2} /2)\tau \right]^2} dz \] 
\begin{equation} \label{eq4} 
=\frac{e^{-r\tau } }{\sigma \sqrt{2\pi \tau } } \int _{K}^{+\infty }\frac{z^{\beta } }{z} \left[\frac{1}{\sqrt{2\pi } } \int _{-\infty }^{\delta (z^{i} /L)}e^{-\frac{y^{2} }{2} } dy \right]\cdot e^{-\frac{1}{2\sigma ^{2} \tau } \left[\ln \frac{X}{z} +(r-q-\sigma ^{2} /2)\tau \right]^2} dz
\end{equation} 
Here, noting that $\exp (\beta \ln z)=z^{\beta } $ and using the transformation
\[y_{2} =\left[\ln \frac{X}{z} +\left(r-q-\frac{\sigma ^{2} }{2} +\beta \sigma ^{2} \right)\tau \right](\sigma \sqrt{\tau } )^{-1} \] 
\[y_{1} =\frac{y\sqrt{\tau _{3} } }{\sqrt{\tau _{3} +i^{2} \tau } } +i\left[\ln \frac{X}{z} +\left(r-q-\frac{\sigma ^{2} }{2} +\beta \sigma ^{2} \right)\tau \right]\left(\sigma \sqrt{\tau _{3} +i^{2} \tau } \right)^{-1},\] 
then we have
\[z>K\Leftrightarrow y_{2} <\left[\ln \frac{X}{K} +\left(r-q-\frac{\sigma ^{2} }{2} +\beta \sigma ^{2} \right)\tau \right]\left(\sigma \sqrt{\tau } \right)^{-1} =a_{2} ,\] 
and noting that
\[ y=y_{1} \sqrt{\frac{\tau _{3} +i^{2} \tau }{\tau _{3} } } -iy_{2} \sqrt{\frac{\tau }{\tau _{3} } } ,\]
then \eqref{eq4} can be written as follows:
\begin{equation} \label{eq5} 
V(X,\; \tau ;\; \tau _{1} )=\frac{X^{\beta } e^{\mu (\beta )\tau } }{2\pi } \sqrt{\frac{\tau _{3} +i^{2} \tau }{\tau _{3} } } \int _{-\infty }^{a_{1} }\int _{-\infty }^{a_{2} }e^{-\frac{\left(\sqrt{\frac{\tau _{3} +i^{2} \tau }{\tau _{3} } } y_{1} -i\sqrt{\frac{\tau }{\tau _{3} } } y_{2} \right)^{2} +y_{2} {}^{2} }{2} }  dy_{2} dy_{1}  .      
\end{equation} 
Now let $\rho =i\sqrt{\tau /(\tau _{3} +i^{2} \tau )} $, then \eqref{eq5} can be written as \eqref{eq3} using two dimensionalnormal distribution function with the correlation matrix $\left(\begin{array}{cc} {1} & {\rho } \\ {\rho } & {1} \end{array}\right)$.
\end{proof}

\begin{remark}
The formula \eqref{eq3} in the case that $\beta =1$or 0, $i=1,\; \; \tau _{1} =\tau _{3} =T-t,$ $\tau _{2} =\beta (T-t)$ is just the pricing formula of the second order binary option of \cite{Bu2004}, the formula \eqref{eq3} in the case that $\beta $ is any real number and $i=1,\; \; \tau _{1} =\tau _{3} =$ $T-t,$ $\tau _{2} =\beta (T-t)$ is just the pricing formula of the second order power binary option (the expression (40) at page 21 of \cite{OC2019}) and the formula \eqref{eq3} with $K=0$ is the just the pricing formula of the power normal distribution standard option (the expression (12) at page 5 of \cite{OC2019}).
\end{remark}

\subsection{A solution representation of a terminal boundary value problem of the Black-Scholes equation and the monotonicity of its solution}

Consider the following problem:
\begin{equation} \label{eq6} 
\frac{\partial w}{\partial t} +\frac{1}{2} \sigma ^{2} x^{2} \frac{\partial ^{2} w}{\partial x^{2} } +cx\frac{\partial w}{\partial x} =0,\; \; 1<x<+\infty ,\; \; 0<t<T 
\end{equation} 
\begin{equation} \label{eq7} 
w(1,\; \; t)=0,\; \; 0\le t<T 
\end{equation} 
\begin{equation} \label{eq8} 
w(x,\; T)=1,\; \; x>1 
\end{equation} 
This is a special terminal boundary value problem of the Black-Scholes equation and the pricing problem of barrier option with zero interest rate, dividend rate $-c$ and the maturity payoff 1. Thus it can be solved by the image solution method \cite{Bu2001}. First, appling the formula \eqref{eq3} with $\beta =0,\; \; i=0,\; \; L=0,\; \; K=1,\; \; r=0,\; \; q=-c$ under consideration $N_{2} (+\infty ,\; a_{2} ;\; A)=N(a_{2} )$, we have the solution of \eqref{eq6} and \eqref{eq8}:
\[u(x,\; \; t)=N\left(\frac{\ln x+(c-\sigma ^{2} /2)(T-t)}{\sigma \sqrt{T-t} } \right).\] 
Then using the image solution method, the solution of \eqref{eq6}, \eqref{eq7}, \eqref{eq8} is provided by
\[w(x,\; \; t)=u(x,\; \; t)-x^{1-2c/\sigma ^{2} } u(1/x,\; \; t).\] 
That is
\begin{equation} \label{eq9} 
w(x,\; \; t)=N\left(\frac{\ln x+(c-\frac{\sigma ^{2} }{2} )(T-t)}{\sigma \sqrt{T-t} } \right)-x^{1-\frac{2c}{\sigma ^{2} } } N\left(\frac{\ln \frac{1}{x} +(c-\frac{\sigma ^{2} }{2} )(T-t)}{\sigma \sqrt{T-t} } \right).    
\end{equation}

\label{theorem2} \begin{theorem} The solution of \eqref{eq6}, \eqref{eq7}, \eqref{eq8} is provided by \eqref{eq9}. Furthermore, we have the following estimates:
\[w_{x} (x,\; \; t)>0,~0<w(x,\; \; t)<1(x>1,\; \; t<T).\] 
\end{theorem}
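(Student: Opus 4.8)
The first assertion---that \eqref{eq9} solves \eqref{eq6}--\eqref{eq8}---is already delivered by the image-solution construction preceding the statement, so the real task is the two estimates. Throughout I would write $\tau=T-t>0$, $\gamma=1-2c/\sigma^{2}$, and denote by
\[d_{1}=\frac{\ln x+(c-\frac{\sigma^{2}}{2})\tau}{\sigma\sqrt{\tau}},\qquad d_{2}=\frac{-\ln x+(c-\frac{\sigma^{2}}{2})\tau}{\sigma\sqrt{\tau}}\]
the two arguments appearing in \eqref{eq9}, so that $w=N(d_{1})-x^{\gamma}N(d_{2})$. The upper bound is immediate: for $x>1$ one has $x^{\gamma}N(d_{2})>0$ and $N(d_{1})<1$, hence $w<N(d_{1})<1$. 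A weak maximum principle for \eqref{eq6} (which carries no zeroth-order term) would give $0\le w\le 1$ but neither strictness nor monotonicity cheaply, so the whole weight falls on the sign of $w_{x}$; once $w_{x}>0$ is known the lower bound follows at once, since $w(1,t)=0$ by \eqref{eq7} (indeed $d_{1}=d_{2}$ and $x^{\gamma}=1$ at $x=1$) and monotonicity forces $w(x,t)>w(1,t)=0$ for $x>1$.

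First I would record the algebraic identity that makes the barrier term cooperate. Computing $d_{1}^{2}-d_{2}^{2}=(d_{1}-d_{2})(d_{1}+d_{2})=\frac{4(c-\sigma^{2}/2)}{\sigma^{2}}\ln x=-2\gamma\ln x$ and using $N'(y)=\frac{1}{\sqrt{2\pi}}e^{-y^{2}/2}$ gives
\[x^{\gamma}N'(d_{2})=x^{\gamma}e^{(d_{1}^{2}-d_{2}^{2})/2}N'(d_{1})=x^{\gamma}x^{-\gamma}N'(d_{1})=N'(d_{1}).\]
Differentiating $w$ with $\partial_{x}d_{1}=(x\sigma\sqrt{\tau})^{-1}$ and $\partial_{x}d_{2}=-(x\sigma\sqrt{\tau})^{-1}$, and substituting this identity to merge the two density contributions, collapses the derivative to
\[w_{x}=\frac{2N'(d_{1})}{x\sigma\sqrt{\tau}}-\gamma\,x^{\gamma-1}N(d_{2}).\]

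The sign of $w_{x}$ now splits on $\gamma$. When $\gamma\le 0$ the second term is nonnegative and the first strictly positive, so $w_{x}>0$ with nothing further to prove. The delicate case is $\gamma>0$, i.e. $c<\sigma^{2}/2$: here one wants $2N'(d_{1})/(\sigma\sqrt{\tau})>\gamma\,x^{\gamma}N(d_{2})$, and rewriting $x^{\gamma}=N'(d_{1})/N'(d_{2})$ via the identity above reduces this, after dividing by $N'(d_{1})>0$, to the single scalar inequality
\[\frac{2}{\sigma\sqrt{\tau}}>\gamma\,\frac{N(d_{2})}{N'(d_{2})}.\]
This is the main obstacle, and it is precisely where the ``rigorous estimate of $N$ and its derivative'' enters. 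The saving observation is that for $\gamma>0$ and $x>1$ one has $c-\sigma^{2}/2<0$ and $\ln x>0$, so $d_{2}<0$; the Mills-ratio bound $\frac{1-N(z)}{N'(z)}<\frac1z$ for $z>0$, applied with $z=-d_{2}$, gives $N(d_{2})/N'(d_{2})<-1/d_{2}$.

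To finish I would exploit the exact form of $d_{2}$. Since $\ln x>0$, I may drop it from the denominator of $-1/d_{2}=\sigma\sqrt{\tau}\,[\ln x+(\sigma^{2}/2-c)\tau]^{-1}$ to obtain the strict bound $-1/d_{2}<\sigma\,[(\sigma^{2}/2-c)\sqrt{\tau}]^{-1}$, and then the arithmetic identity $\gamma=\frac{2(\sigma^{2}/2-c)}{\sigma^{2}}$ converts $\gamma\cdot\sigma\,[(\sigma^{2}/2-c)\sqrt{\tau}]^{-1}$ into exactly $\frac{2}{\sigma\sqrt{\tau}}$. The inequality is strict because the positive quantity $\ln x$ was discarded, which establishes $w_{x}>0$ for all $x>1$, $t<T$, and with it the lower bound $w>0$ as explained above. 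The only genuinely subtle point is the estimate in this paragraph; the rest is bookkeeping with the identity $x^{\gamma}N'(d_{2})=N'(d_{1})$.
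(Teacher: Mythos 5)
Your proposal is correct, and it reaches the crucial inequality by a genuinely different route than the paper. The two arguments coincide up to the derivative computation: your $w_{x}=\frac{2N'(d_{1})}{x\sigma\sqrt{\tau}}-\gamma x^{\gamma-1}N(d_{2})$, obtained via the image identity $x^{\gamma}N'(d_{2})=N'(d_{1})$, is exactly the paper's factorization $f'(x)=\frac{2}{S}x^{-2\xi/S-1}g(x)$ with $g(x)=N'\bigl(\xi-\tfrac{\ln x}{S}\bigr)+\xi N\bigl(\xi-\tfrac{\ln x}{S}\bigr)$, since $\xi=-\gamma S/2$ and $d_{2}=\xi-\tfrac{\ln x}{S}$; and both treatments dismiss the easy case ($\gamma\le 0$, i.e.\ $\xi\ge 0$) at once. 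The divergence is in the hard case $\gamma>0$ (the paper's $\xi<0$): the paper proves $g>0$ on $[1,+\infty)$ \emph{intrinsically}, by computing $g'$ (whose sign is that of $-\ln x$), concluding $g$ is maximal at $x=1$ and strictly decreasing on $(1,\infty)$ with limit $g(+\infty)=0$, which forces positivity (its Lemmas 1 and 2); you instead invoke the standard Mills-ratio bound $1-N(z)<N'(z)/z$ for $z>0$ at $z=-d_{2}$, and close with the algebraic observation that discarding $\ln x>0$ from $-d_{2}$ turns $\gamma\cdot\sigma\bigl[(\sigma^{2}/2-c)\sqrt{\tau}\,\bigr]^{-1}$ into exactly $2/(\sigma\sqrt{\tau})$. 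The content is equivalent --- the paper's Lemma 2, $N'(\xi)+\xi N(\xi)>0$, \emph{is} the Mills bound at the point $\xi$, and its monotone-to-a-limit argument is in effect a self-contained proof of that bound on the relevant ray --- but your version is pointwise and outsources the analytic work to a named classical inequality, staying entirely inside the region $x>1$ that the theorem actually concerns, whereas the paper's argument is self-contained and yields slightly more information (a unique zero $x_{0}\in(0,1)$ of $f'$, hence the full sign pattern of $w_{x}$ on all of $(0,\infty)$). A minor additional difference in your favor: your upper bound $w<N(d_{1})<1$ is immediate, while the paper's $w(x,t)<w(+\infty,t)=1$ leans on the monotonicity it has just established.
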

\noindent ${Proof.}$ Let $R=c\cdot (T-t),\; \; S=\sigma \sqrt{T-t} (>0)$. Then we have
\[w(x,\; \; t)=N\left(\frac{2R-S^{2} }{2S} +\frac{1}{S} \ln x\right)-x^{1-\frac{2R}{S^{2} } } N\left(\frac{2R-S^{2} }{2S} -\frac{1}{S} \ln x\right).\] 
Now, let $\frac{2R-S^{2} }{2S} =\xi $, then $w(x,\; \; t)$ is written as follows

\[f(x\; ;S,\; \; \xi )=N\left(\xi +\frac{1}{S} \ln x\right)-x^{-\frac{2\xi }{S} } N\left(\xi -\frac{1}{S} \ln x\right)\;\;\;(S>0,\; \; \xi \in R \text{: parameters})\].

\noindent Evidently, we have
\begin{equation} \label{eq10} 
f(0+)=f(1)=0,\; \; f(+\infty )=1.                        
\end{equation} 
And we have
\begin{equation} \label{eq11} 
f'(x)=\frac{2}{S} x^{-\frac{2\xi }{S} -1} \left[\frac{1}{\sqrt{2\pi } } \exp \left\{-\frac{1}{2} \left[\xi -\frac{1}{S} \ln x\right]^{2} \right\}+\xi {\kern 1pt} N\left(\xi -\frac{1}{S} \ln x\right)\right]=\frac{2}{S} x^{-\frac{2\xi }{S} -1} \cdot g(x).  
\end{equation} 
Thus if $\xi \ge 0$, then $f'(x)>0(\forall x>0)$. 

Now assume that $\xi <0$. Note that
\[g(x)=\frac{1}{\sqrt{2\pi } } \exp \left\{-\frac{1}{2} \left[\xi -\frac{1}{S} \ln x\right]^{2} \right\}+\xi {\kern 1pt} N\left(\xi -\frac{1}{S} \ln x\right).\] 
Then we have
\begin{equation} \label{eq12} 
g(0+)=\xi <0, g(1)=\frac{1}{\sqrt{2\pi } } e^{-\xi ^{2} /2} +\xi {\kern 1pt} N(\xi ),\; \; g(+\infty )=0.         
\end{equation} 
The proof of Theorem 2 needs the following Lemmas.

\label{lemma1} \begin{lemma} $f(x)<0,\; \; f'(x)<0$ in a right neighborhood of $x=0$. In particular, if $-S/2<\xi <0$, then $f'(0+)=-\infty $; if $\xi =-S/2$, then $f'(0+)=-1<0$ and if $\xi <-S$ $/2$, then $f'(0+)=0$.
\end{lemma}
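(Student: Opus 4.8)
The plan is to analyze the behavior of $f$ and $f'$ as $x \to 0+$ under the assumption $\xi < 0$, using the explicit formula for $f$ and the decomposition \eqref{eq11}. First I would examine $f$ itself: as $x \to 0+$, the argument $\frac{1}{S}\ln x \to -\infty$, so $N(\xi + \frac{1}{S}\ln x) \to 0$, while in the second term $N(\xi - \frac{1}{S}\ln x) \to 1$ and the prefactor $x^{-2\xi/S} \to 0$ when $\xi < 0$ (since $-2\xi/S > 0$). Thus both terms vanish, giving $f(0+) = 0$ as already recorded in \eqref{eq10}. To get the sign of $f$ in a right neighborhood, I would compare the two decaying terms: the first is $N$ of a quantity tending to $-\infty$, which decays like the Gaussian tail, whereas the second term has a polynomial factor $x^{-2\xi/S}$ times a factor bounded near $1$. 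Using the standard Gaussian tail estimate $N(-u) \sim \frac{1}{u\sqrt{2\pi}} e^{-u^2/2}$ as $u \to +\infty$, the first term decays super-polynomially (like $e^{-(\ln x)^2/(2S^2)}$), so it is dominated by the second term $x^{-2\xi/S}$, which decays only polynomially. Hence $f(x) \approx -x^{-2\xi/S} < 0$ near $x = 0$, establishing $f(x) < 0$ there.

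Next I would turn to $f'$, whose sign is governed by $g(x)$ via \eqref{eq11}, since the factor $\frac{2}{S}x^{-2\xi/S - 1}$ is positive. From \eqref{eq12} we have $g(0+) = \xi < 0$, which immediately gives $f'(x) < 0$ in a right neighborhood of $x = 0$ by continuity. For the refined claim about $f'(0+)$ itself, the subtlety is that $f'$ carries the prefactor $x^{-2\xi/S - 1}$, whose exponent $-2\xi/S - 1$ changes sign depending on how $\xi$ compares to $-S/2$. I would compute $f'(0+) = \lim_{x\to 0+}\frac{2}{S}x^{-2\xi/S - 1} g(x)$ by combining the limit $g(x) \to \xi$ with the behavior of the power $x^{-2\xi/S-1}$: when $-2\xi/S - 1 < 0$, i.e. $\xi > -S/2$, the power blows up and $f'(0+) = -\infty$; when $-2\xi/S - 1 = 0$, i.e. $\xi = -S/2$, the power is $1$ and $f'(0+) = \frac{2}{S}\cdot\xi = \frac{2}{S}\cdot(-\tfrac{S}{2}) = -1$; when $-2\xi/S - 1 > 0$, i.e. $\xi < -S/2$, the power tends to $0$ and $f'(0+) = 0$.

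A technical point to handle carefully is that in the boundary case $\xi = -S/2$ and the case $\xi < -S/2$, the leading behavior $g(x)\to\xi$ is enough, but one should confirm that the correction term in $g$ coming from the Gaussian exponential $\frac{1}{\sqrt{2\pi}}\exp\{-\frac12[\xi - \frac1S\ln x]^2\}$ does not interfere. Since that exponential decays super-polynomially as $x\to 0+$ while the relevant power of $x$ is at most polynomial, the exponential contribution to $x^{-2\xi/S-1}g(x)$ vanishes in every case, so the limit is determined purely by the $\xi\, N(\cdot)$ term with $N(\cdot)\to 1$. I expect the main obstacle to be this interplay of the polynomial prefactor against the Gaussian tail — namely justifying rigorously that the Gaussian term is always negligible relative to the power factor — but the standard tail asymptotic $N(-u)\sim \frac{1}{u\sqrt{2\pi}}e^{-u^2/2}$ settles it cleanly, so the case analysis on the sign of $-2\xi/S - 1$ becomes routine once that estimate is in hand.
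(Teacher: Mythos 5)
Your proposal is correct, and its core --- reading the sign and size of $f'$ near $x=0$ from the factorization \eqref{eq11}, $f'(x)=\frac{2}{S}x^{-2\xi/S-1}g(x)$, together with the limit $g(0+)=\xi<0$ from \eqref{eq12}, then splitting into three cases according to the sign of the exponent $-2\xi/S-1$ --- is exactly the paper's argument (the paper compresses all of this into one sentence citing \eqref{eq11}, \eqref{eq12} and \eqref{eq10}). Where you genuinely diverge is the claim $f(x)<0$: you prove it by comparing decay rates, invoking the Gaussian tail estimate $N(-u)\sim\frac{1}{u\sqrt{2\pi}}e^{-u^{2}/2}$ to show the first term of $f$ is super-polynomially small while the second is $\sim -x^{-2\xi/S}$; the paper instead gets it for free from what you have already established: since $f(0+)=0$ (first expression of \eqref{eq10}) and $f'<0$ on some interval $(0,\varepsilon)$, monotonicity gives $f(x)<\lim_{y\to 0+}f(y)=0$ there. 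The paper's route is shorter and needs no tail asymptotics; yours costs more but yields the sharper quantitative statement $f(x)=-x^{-2\xi/S}(1+o(1))$. One further simplification: the ``technical point'' you raise about the Gaussian correction term inside $g$ interfering with the limit of $x^{-2\xi/S-1}g(x)$ is moot, because \eqref{eq12} already gives $g(x)\to\xi\neq 0$ as a whole, and the power factor tends to $+\infty$, $1$, or $0$ in the three respective cases, so the product's limit ($-\infty$, $-1$, $0$) follows with no indeterminate form and no separate estimate of the exponential term is required.
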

\begin{proof} It is evident from \eqref{eq11} and the first expression of \eqref{eq12}, and the first expression of \eqref{eq10}. 
\end{proof}

\label{lemma2} \begin{lemma} $g(1)=\frac{1}{\sqrt{2\pi } } e^{-\xi ^{2} /2} +\xi {\kern 1pt} N(\xi )>0$.
\end{lemma}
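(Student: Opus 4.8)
The plan is to regard $g(1)$ as a function of the single real parameter $\xi$ and to exploit the defining differential relation of the normal density. Writing $N'(\xi)=\frac{1}{\sqrt{2\pi}}e^{-\xi^{2}/2}$ for the standard normal density, the quantity in question is $h(\xi):=N'(\xi)+\xi N(\xi)$, and the goal is to show $h(\xi)>0$ for every $\xi\in\mathbb{R}$. The case $\xi\ge 0$ is immediate, since then both summands are nonnegative and $N'(\xi)>0$; so the real content lies in $\xi<0$, but the monotonicity argument below handles all $\xi$ uniformly.

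First I would differentiate, using $N''(\xi)=-\xi N'(\xi)$. A direct computation gives
\[ h'(\xi)=N''(\xi)+N(\xi)+\xi N'(\xi)=-\xi N'(\xi)+N(\xi)+\xi N'(\xi)=N(\xi), \]
so that $h'(\xi)=N(\xi)>0$ for all real $\xi$. Hence $h$ is strictly increasing on $\mathbb{R}$.

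Next I would identify the limit at $-\infty$. Clearly $N'(\xi)\to 0$; for the second term I would invoke the Mills-ratio tail bound $N(\xi)\le N'(\xi)/|\xi|$ valid for $\xi<0$ (equivalently $1-N(x)\le N'(x)/x$ for $x>0$, combined with the evenness of $N'$), which yields $|\xi|N(\xi)\le N'(\xi)\to 0$ and therefore $\xi N(\xi)\to 0$. Consequently $\lim_{\xi\to-\infty}h(\xi)=0$. Combining this with strict monotonicity gives $h(\xi)>\lim_{\eta\to-\infty}h(\eta)=0$ for every finite $\xi$, which is exactly the assertion $g(1)>0$.

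The only delicate point is the behaviour at $-\infty$: one must justify that $\xi N(\xi)\to 0$, and not merely that $N(\xi)\to 0$. I expect this tail estimate to be the sole nonroutine step, and it is settled at once by the bound $N(\xi)\le N'(\xi)/|\xi|$ above (or, alternatively, by a single application of L'Hôpital's rule to $N(\xi)\big/(1/\xi)$). Everything else reduces to the one-line differentiation $h'=N$, so no further obstacle is anticipated.
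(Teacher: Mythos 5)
Your proof is correct, but it runs along a genuinely different axis than the paper's. You freeze the quantity $g(1)$ and view it as the diagonal function $h(\xi)=N'(\xi)+\xi N(\xi)$ of the parameter $\xi$, then use the clean identity $h'(\xi)=N(\xi)>0$ together with $\lim_{\xi\to-\infty}h(\xi)=0$; the price is that the limit is not free, since the coefficient $\xi$ blows up while $N(\xi)\to 0$, and you correctly pay it with the Mills-ratio bound $N(\xi)\le N'(\xi)/|\xi|$ for $\xi<0$. The paper instead keeps $\xi$ (and $S$) fixed and studies $g$ as a function of $x$: from $g'(x)=-(S\sqrt{2\pi}\,x)^{-1}\exp\{-[\xi-S^{-1}\ln x]^2/2\}\ln x$, the function $g$ increases strictly on $(0,1)$ and decreases strictly on $(1,\infty)$, so $g(1)=\max g$; since $g(+\infty)=0$ (trivially here, because $\xi$ is a \emph{fixed} coefficient multiplying $N(\xi-S^{-1}\ln x)\to 0$, so no tail estimate is needed) and the decrease on $(1,\infty)$ is strict, $g(1)>0$ follows at once. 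Each approach buys something: yours is self-contained and proves the classical inequality $\varphi(\xi)+\xi\Phi(\xi)>0$ in its natural generality (it is, after all, the statement that $E[(\xi+Z)^+]>0$ for standard normal $Z$), while the paper's argument does double duty — the unimodality of $g$ in $x$ that it establishes along the way is precisely what the subsequent proof of Theorem 2 reuses to get the unique zero $x_0$ of $f'$ and hence the sign pattern of $f'$. If your proof replaced the paper's, that monotonicity structure of $g$ would still have to be proved separately for Theorem 2.
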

\begin{proof} Since $g'(x)=-(S\sqrt{2\pi } \; x)^{-1} \exp \{ -[\xi -S^{-1} \ln x]^{2} /2\} \cdot \ln x$, $g(x)$ is strictly increasing for $0<x<1$ and strictly decreasing for $x>1$, and thus we have $g(1)=$ $\max g(x)$. From the last expression, we have the conclusion.
\end{proof}

Using Lemma 2, Theorem 2 is soon proved. In fact, $g(x)$ is strictly increasing in the interval $0<x<1$, $g(0+)<0$ from the first expression of \eqref{eq12} and $g(1)>0$ from Lemma 2, there exists a unique $x_{0} $ in the interval [0, 1] such that $g(x_{0} )=0$ and $g(x)\ne 0$ in the interval $[1,\; \; +\infty )$. Thus $x_{0} $ is the unique point such that $f'(x_{0} )=0$ and therefore $f'(x)<0$on $(0,\; \; x_{0} )$ and $f'(x)>0$ on $[x_{0} ,\; \; +\infty )$. So we have $w_{x} (x,\; \; t)>0$($x>1$) and $0=w(1,\; \; t)\le w(x,\; \; t)<w(+\infty ,\; \; t)=1$.

\section{Application to the pricing problem for option on bonds with credit risk. }

\subsection{One factor structural model for straight credit bond}

Main Assumptions are as follows:

\textbf{Assumption 1}: Short-term interest rate $r$ is constant. 

\textbf{Assumption 2}: The firm value $V$ consists of $m$ shares of traded assets $S$ and $n$ sheets of zero coupon bonds:
\[V_{t} =mS_{t} +nB_{t} \] 
and follows the geometric Brownian motion

\[dV=(r-q)Vdt+\sigma VdW \;\text{(under the risk neutral measure)}.\]

\noindent The firm pays out dividend in the ratio of $q$(a constant) for a unit of firm value, continuously (volatility $\sigma $: constant)

\textbf{Assumption 3}: Expected default occurs when

\[V\le V_{b} (t) \;\text{(Here}\; V_{b} (t)=be^{-a(T-t)} , a,\; b: \text{constants).}\]

\textbf{Assumption 4}: Default recovery is $R_{d} =R\cdot e^{-r(T-t)} $ ($0\le R<1$:constant).

\textbf{Assumption 5}: Our Corporate bond price is given by a sufficiently smooth deterministic function $B(V,\; \; t)$. The face value of the bond (the price at maturity $T$) is 1(unit of currency).

According to the argument in Section 3 of \cite{OW2013}, $B=B(V,\; t)$ is the solution to the following solving problem for the terminal vlue problem of Black-Scholes equation:
\begin{equation} \label{eq13} 
\frac{\partial B}{\partial t} +\frac{1}{2} \sigma ^{2} V^{2} \frac{\partial ^{2} B}{\partial V^{2} } +(r-q)V\frac{\partial B}{\partial V} -rB=0,\; \; V_{b} (t)<V<+\infty ,\; \; 0<t<T 
\end{equation} 
\begin{equation} \label{eq14} 
B(V_{b} (t),\; \; t)=e^{-r(T-t)} R,\; \; 0<t<T 
\end{equation} 
\begin{equation} \label{eq15} 
B(V,\; T)=1,\; \; V>V_{b} (T)=b 
\end{equation} 

\begin{theorem} The solution to the problem \eqref{eq13}$\mathrm{\sim}$\eqref{eq15} is provided as follows:
\begin{equation} \label{eq16} 
B(V,\; \; t)=R\cdot e^{-r(T-t)} +W(V,\; \; t)(1-R)e^{-r(T-t)}  
\end{equation} 
Here $W(V,t)$ is the surviving probability of the bond, that is, the probability of no default at time $t(0\le t<T)$ and given as follows:
\begin{equation} \label{eq17} 
W(V,\; \; t)=N(d_{1} )-\left(\frac{V}{be^{-a(T-t)} } \right)^{1-\frac{2(r-q-a)}{\sigma ^{2} } } N(d_{2} ) 
\end{equation} 
\begin{equation} \label{eq18} 
\begin{array}{l} {d_{1} =\frac{\ln \frac{V}{be^{-a(T-t)} } +(r-q-a-\frac{\sigma ^{2} }{2} )(T-t)}{\sigma \sqrt{T-t} } =\frac{\ln \frac{V}{b} +(r-q-\frac{\sigma ^{2} }{2} )(T-t)}{\sigma \sqrt{T-t} } } \\ {d_{2} =\frac{\ln \frac{be^{-a(T-t)} }{V} +(r-q-a-\frac{\sigma ^{2} }{2} )(T-t)}{\sigma \sqrt{T-t} } =\frac{\ln \frac{b}{V} +(r-q-2a-\frac{\sigma ^{2} }{2} )(T-t)}{\sigma \sqrt{T-t} } } \end{array} 
\end{equation} 
Furthermore, we have 
\[B_{V} (V,\; t)>0,\; \;R\cdot e^{-r(T-t)}<B(V,\; t)<e^{-r(T-t)}, \; \; V>V_{b} (t).\] 
\end{theorem}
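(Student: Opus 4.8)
The plan is to reduce the terminal boundary value problem \eqref{eq13}--\eqref{eq15} to the already-solved problem \eqref{eq6}--\eqref{eq8} of Theorem 2 by peeling off the explicit part of the solution and straightening the moving boundary, and then to transport the monotonicity and bound estimates of Theorem 2 back through these transformations.

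First I would observe that the constant-recovery term $Re^{-r(T-t)}$ is itself a solution of the Black-Scholes operator in \eqref{eq13} (its $V$-derivatives vanish and $\partial_t(Re^{-r(T-t)}) - r\,Re^{-r(T-t)} = 0$) and that it already realizes the boundary datum \eqref{eq14} on $V=V_b(t)$. Hence, setting $U = B - Re^{-r(T-t)}$, linearity shows that $U$ solves the same equation \eqref{eq13} with the homogeneous lateral condition $U(V_b(t),t)=0$ and terminal condition $U(V,T)=1-R$. Factoring further as $U = (1-R)e^{-r(T-t)}W$, a short computation (the $t$-derivative of $e^{-r(T-t)}$ cancels the zeroth-order term) shows that $W$ satisfies the reduced equation $W_t + \tfrac12\sigma^2 V^2 W_{VV} + (r-q)V W_V = 0$ together with $W(V_b(t),t)=0$ and $W(V,T)=1$ for $V>b$.

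Next I would straighten the moving barrier $V=V_b(t)=be^{-a(T-t)}$ by the substitution $x = V/V_b(t) = (V/b)e^{a(T-t)}$, so that $x>1\iff V>V_b(t)$ and $x=1$ on the barrier. Writing $w(x,t)=W(V,t)$ and carrying out the chain rule --- here one must keep track of the fact that $x$ depends on $t$, so $W_t = w_t - ax\,w_x$, while $W_V = (x/V)w_x$ and $W_{VV}=(x/V)^2 w_{xx}$ since $x/V=e^{a(T-t)}/b$ is independent of $V$ --- transforms the equation for $W$ into exactly \eqref{eq6} with $c=r-q-a$, and sends the conditions $W(V_b(t),t)=0$, $W(V,T)=1$ into \eqref{eq7}, \eqref{eq8}. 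Theorem 2 then gives $w$ via \eqref{eq9}; substituting back $x=V/(be^{-a(T-t)})$ and simplifying the two normal arguments to $d_1,d_2$ in \eqref{eq18} yields precisely \eqref{eq17}, which establishes \eqref{eq16}.

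Finally, the estimates follow by transport. Theorem 2 gives $w_x>0$ and $0<w<1$ for $x>1$, $t<T$; since $x$ is a strictly increasing function of $V$, $W_V = (x/V)w_x>0$, and $0<W<1$. As $B = e^{-r(T-t)}[R+(1-R)W]$ with $0\le R<1$, we get $B_V=(1-R)e^{-r(T-t)}W_V>0$ and, from $0<W<1$, the sandwich $R<R+(1-R)W<1$ multiplied by $e^{-r(T-t)}$ gives $Re^{-r(T-t)}<B<e^{-r(T-t)}$. The only delicate point is the change-of-variable computation in the third step --- correctly accounting for the time dependence of $x$ in $W_t$ and confirming that $x/V$ is $V$-independent so that no first-order cross terms survive in $W_{VV}$; everything else is linear bookkeeping or a direct appeal to Theorem 2.
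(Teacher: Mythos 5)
Your proposal is correct and follows essentially the same route as the paper: the same decomposition \eqref{eq16}, the same barrier-straightening substitution $x = V/(be^{-a(T-t)})$ reducing the problem for $W$ to \eqref{eq6}--\eqref{eq8} with $c = r-q-a$, and the same appeal to Theorem 2 for both the closed-form \eqref{eq17}--\eqref{eq18} and the monotonicity and bound estimates. The only difference is presentational: you justify the decomposition by direct PDE verification (noting $Re^{-r(T-t)}$ solves the equation and matches the boundary datum, then factoring out $(1-R)e^{-r(T-t)}$), whereas the paper obtains it from the financial interpretation of $W$ as the survival probability --- your version is, if anything, the more self-contained of the two.
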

\begin{proof} The problem  \eqref{eq13}$\mathrm{\sim}$\eqref{eq15} is a special case of the problem (3.11) of \cite{OW2013} and thus can be solved by the method of \cite{OW2013}. At time $0\le t<T$, the bond holder can receive at least the default recovery $R\cdot e^{-r(T-t)} $ regardless of default and in the case of no default at $t$ he can receive the remainder $(1-R)e^{-r(T-t)} $. Thus if $W(V,\; \; t)$ is the probability of no default at time $t$, then our bond price will be represented by \eqref{eq16}. Then $W(V,\; \; t)$ is the solution to the following problem:
\[\frac{\partial W}{\partial t} +\frac{1}{2} \sigma ^{2} V^{2} \frac{\partial ^{2} W}{\partial V^{2} } +(r-q)V\frac{\partial W}{\partial V} =0,\; \; V_{b} (t)<V<+\infty ,\; \; 0<t<T\] 
\[W(V_{b} (t),\; \; t)=0,\; \; 0\le t<T\] 
\[W(V,\; T)=1,\; \; V>b.\] 
If we use the new variables$V/(be^{-a(T-t)} )=x,\; \; W(V,\; \; t)=w(x,\; \; t)$, then this problem is transformed to the problem \eqref{eq6}, \eqref{eq7}, \eqref{eq8} with $c=r-q-a$ and thus $w(x,\; \; t)$ is provided by \eqref{eq9}. If we return to the original variables, then we obtain \eqref{eq17}, \eqref{eq18}. 

From Theorem 2, we have $0<W(V,\; \; t)<1$, $W_{V} (V,\; \; t)>0$ and form this result we have $B_{V} (V,\; t)>0,\; \;R\cdot e^{-r(T-t)}<B(V,\; t)<e^{-r(T-t)}$.
\end{proof}

\subsection{A Pricing Model for Bond Option and Pricing Formula.}

Now we assume that the bond issuing company added to the bond defined by the Assumptions 1$\mathrm{\sim}$5 a new provision that allows the bond holder to demand early redemption at a predetermined date prior to the maturity.

\textbf{Assumption 6}: At a predetermined date , the bond holder can receive $E$(unit of currency)(the \textit{early redemption premium}) and end the bond contract.

By this provision, the bond holder has the credit bond defined by assumptions 1$\mathrm{\sim}$5 together with the right to sell this bond for the exercise price $E$ at time $T_{1} $. Thus the Assumption 6 provides the bond holder the bond (put) option with maturity $T_{1} $. This is a bond option on zero coupon bond with credit risk. If the default event occurs, then the bond holder receives the default recovery and the bond contract is ended, and thus the bond option contract is also ended. Therefore this bond option contract is a kind of barrier option contract and its price is the solution to the following problem
\begin{equation} \label{eq19} 
\frac{\partial P}{\partial t} +\frac{1}{2} \sigma ^{2} V^{2} \frac{\partial ^{2} P}{\partial V^{2} } +(r-q)V\frac{\partial P}{\partial V} -rP=0,\; \; V>V_{b} (t),\; \; 0\le t<T_{1}  
\end{equation} 
\begin{equation} \label{eq20} 
P(V_{b} (t),\; t)=0,\; \; \; \; \; \; \; \; \; \; \; \; \; \; \; \; \; \; 0\le t<T_{1}  
\end{equation} 
\[P(V,\; T_{1} )=(E-B(V,\; T_{1} ))^{+} ,\; \; V>V_{b} (T_{1} )=be^{-a(T-T_{1} )} .\] 
Here $B(V,\; T_{1} )$ is provided by \eqref{eq16}. From Theorem 3, $B(V,\; T_{1} )$ is a strict increasing function on $V$ and we have
\[\inf _{V} B(V,\; \; T_{1} )=e^{-r(T-T_{1} )} R\; ,\; \; \sup _{V} B(V,\; \; T_{1} )=e^{-r(T-T_{1} )} .\] 
Therefore if $e^{-r(T-T_{1} )} R<E<e^{-r(T-T_{1} )} $, then there exists a unique number (called the \textit{early redemption boundary})  $K\ge V_{b} (T_{1} )$ $=be^{-a(T-T_{1} )} $ such that

$\; \; \; \;\; \; \; \; \; \; \;\; \; \; \;\; \; \; \;\; \; \; \;\;\; \; \; \;\; \; \; \;\; \; \; \;\; \; \; \; B(V,\; T_{1} )<E$ if $V<K$;\; \; \; \; $B(V,\; T_{1} )\ge E$ if $V\ge K$.

\noindent Thus the terminal value condition is written by
\[P(V,\; T_{1} )=E\cdot 1\{ V<K\} -B(V,\; T_{1} )\cdot 1\{ V<K\} ,\; \; V>V_{b} (T_{1} )=be^{-a(T-T_{1} )} . \] 
From \eqref{eq16}$\mathrm{\sim}$\eqref{eq18}, the terminal value condition can be written as
\begin{equation} \label{eq21} 
\begin{array}{l} {P(V,\; T_{1} )=[E-R\cdot e^{-r(T-T_{1} )} ]\cdot 1\{ V<K\} -(1-R)e^{-r(T-T_{1} )} N(d_{1} )\cdot 1\{ V<K\} } \\ {\; \; \; \; \; \; \; \; \; \; \; \; \; \; +(1-R)e^{-r(T-T_{1} )} \left(\frac{V}{be^{-a(T-T_{1} )} } \right)^{1-\frac{2(r-q-a)}{\sigma ^{2} } } N(d_{2} )\cdot 1\{ V<K\} ,\; \; V>be^{-a(T-T_{1} )} } \end{array} 
\end{equation} 

The problem \eqref{eq19}, \eqref{eq20}, \eqref{eq21} is a pricing problem for barrier option with moving barrier. Using new variables , then this problem is changed to the following problem. 
\begin{equation} \label{eq22}
\frac{\partial p}{\partial t} +\frac{1}{2} \sigma ^{2} x^{2} \frac{\partial ^{2} p}{\partial x^{2} } +(r-q-a)x\frac{\partial p}{\partial x} -rp=0,\; \; b<x<+\infty ,\; \; 0<t<T_{1}  
\end{equation} 
\begin{equation} \label{eq23} 
p(b,\; \; t)=0,\; \; 0\le t<T_{1}  
\end{equation} 
\begin{equation} \label{eq24} 
\begin{array}{l} {p(x,\; T_{1} )=[E-R\cdot e^{-r(T-T_{1} )} ]\cdot 1\{ b<x<Ke^{a(T-T_{1} )} \} } \\ {\; \; \; \; \; \; \; \; \; \; \; \; -(1-R)e^{-r(T-T_{1} )} N(d_{1} )\cdot 1\{ b<x<Ke^{a(T-T_{1} )} \} } \\ {\; \; \; \; \; \; \; \; \; \; \; \; +(1-R)e^{-r(T-T_{1} )} (x/b)^{1-\frac{2(r-q-a)}{\sigma ^{2} } } N(d_{2} )\cdot 1\{ b<x<Ke^{a(T-T_{1} )} \} =f_{1} +f_{2} +f_{3} .} \end{array} 
\end{equation}
Here 
\[d_{1} =\frac{\ln x/b+(r-q-a-\sigma ^{2} /2)(T-T_{1} )}{\sigma \sqrt{T-T_{1} } } =\delta (x/b,\; \; T-T_{1} ,\; \; 0,\; \; T-T_{1} ),\] \\
\[d_{2} =\frac{\ln b/x+(r-q-a-\sigma ^{2} /2)(T-T_{1} )}{\sigma \sqrt{T-T_{1} } } =\delta (b/x,\; \; T-T_{1} ,\; \; 0,\; \; T-T_{1} ),\]
\[f_{1} (x)=[E-R\cdot e^{-r(T-T_{1} )} ]\cdot 1\left\{ b<x<Ke^{a(T-T_{1} )} \right\} ,\; \] 
\[f_{2} (x)=-(1-R)e^{-r(T-T_{1} )} N(d_{1} )\cdot 1\left\{ b<x<Ke^{a(T-T_{1} )} \right\} ,\] 
\[f_{3} (x)=(1-R)e^{-r(T-T_{1} )} \left(\frac{x}{b} \right)^{1-\frac{2(r-q-a)}{\sigma ^{2} } } N(d_{2} )\cdot 1\left\{ b<x<Ke^{a(T-T_{1} )} \right\} .\] 

The problem \eqref{eq22}$\mathrm{\sim}$\eqref{eq24} is a pricing problem of barrier option with interest rate $r$, dividend rate $q+a$, constant barrier $b$ and can be solved by image solution method(\cite{Bu2001}). That is, if the solution to the problem \eqref{eq22} and \eqref{eq24} is $u(x,\; \; t)$, then the solution to the problem \eqref{eq22}, \eqref{eq23} and \eqref{eq24} is provided by
\begin{equation} \label{eq25} 
p(x,\; \; t)=u(x,\; \; t)-(x/b)^{1-2(r-q-a)/\sigma ^{2} } u(b^{2} /x,\; \; t).               
\end{equation} 

Denote the solution to \eqref{eq22} with the terminal value $p(x,\; T_{1} )=f_{i} (x)$ by $u_{i} (x,\; \; t)$. 

Since $f_{1} (x)=[E-R\cdot e^{-r(T-T_{1} )} ]\cdot (1\{ x>b\} - 1\{ x>Ke^{a(T-T_{1} )} \})$, applying Theorem 1 when $\beta =0,\; \; i=0,\; \; L=0,\; \; r=r,\;\;$$q=q+a$, $K=Ke^{a(T-T_{1} )} $ and when $K=b$, and considering $N_{2} (+\infty ,\; a_{2} ;\; A)=N(a_{2} )$, then we have
\[u_{1} (x,\; \; t)=[E-R\cdot e^{-r(T-T_{1} )} ]\cdot e^{-r(T_{1} -t)} [N(b_{1})-b_{2})]=[E\cdot e^{-r(T_{1} -t)} -R\cdot e^{-r(T-t)} ][N(b_{1} )-N(b_{2} )].\] 
Here
\[b_{1} =\delta (\frac{x}{b} ,\; \; T_{1} -t,\; \; 0,\; \; T_{1} -t)=\frac{\ln \frac{x}{b} +(r-q-a-\sigma ^{2} /2)(T_{1} -t)}{\sigma \sqrt{T_{1} -t} } .\] 
\[b_{2} =\delta (\frac{x}{Ke^{a(T-T_{1} )} } ,\; \; T_{1} -t,\; \; 0,\; \; T_{1} -t)=\frac{\ln \frac{x}{Ke^{a(T-T_{1} )} } +(r-q-a-\sigma ^{2} /2)(T_{1} -t)}{\sigma \sqrt{T_{1} -t} } ,\]

Since $f_{2} (x)=-(1-R)e^{-r(T-T_{1} )} N(d_{1} )\cdot (1\{ x>b\} - 1\{ x>Ke^{a(T-T_{1} )} \})$, applying Theorem 1 when $\beta =0,\; \; i=1,\; \; L=b,\;\;$$r=r,\; \; q=q+a$,$\;\;K=Ke^{a(T-T_{1} )} $ and when $K=b$, then we have
\[u_{2} (x,\; \; t)=-(1-R)e^{-r(T-t)} [N_{2} (\overline{d}_{1} ,\; \; b_{1} ;\rho)-N_{2} (\overline{d}_{1} ,\; \; b_{2} ;\rho)].\] 
Here
\[\overline{d}_{1} =\delta (\frac{x}{b} ,\; \; T-t,\; \; 0,\; \; T-t)=\frac{\ln \frac{x}{b} +(r-q-a-\frac{\sigma ^{2} }{2} )(T-t)}{\sigma \sqrt{T-t} }   \] 
\[\rho=\sqrt{(T_{1} -t)/(T-t)}\]

Since $f_{3} (x)=(1-R)e^{-r(T-T_{1} )} \left(\frac{x}{b} \right)^{1-\frac{2(r-q-a)}{\sigma ^{2} } } N(d_{2} )\cdot (1\{ x<Ke^{a(T-T_{1} )} \} -1\{ x<b\} )$, so applying Theorem 1 when $\beta =1-2\cdot (r-q-a)/\sigma ^{2} ,$$\;\;i=-1,\; \; L=b^{-1} ,\; \; r=r,$$\;\;q=q+a$,$\;\;K=Ke^{a(T-T_{1} )} $ and when $K=b$, and considering $\mu (\beta )=-r$, then we have
\[\begin{array}{l} {u_{3} (x,\; \; t)=(1-R)e^{-r(T-T_{1} )} e^{\mu (\beta )(T_{1} -t)} (x/b)^{\beta } [N_{2} (\overline{d}_{2} ,\; \; b_{3} ;-\rho)-N_{2} (\overline{d}_{2} ,\; \; b_{4} ;-\rho)]=} \\ {\; \; \; \; \; \; \; \; \; \; \; =(1-R)e^{-r(T-t)} (x/b)^{1-2(r-q-a)/\sigma ^{2} } [N_{2} (\overline{d}_{2} ,\; \; b_{3} ;-\rho)-N_{2} (\overline{d}_{2} ,\; \; b_{4} ;-\rho)]} \end{array}.\] 
Here
\[b_{3} =\delta \left(\frac{x}{b} ,\; T_{1} -t,\; \beta (T_{1} -t),\; T_{1} -t\right)=\frac{\ln \frac{x}{b} -(r-q-a-\frac{\sigma ^{2} }{2} )(T_{1} -t)}{\sigma \sqrt{T_{1} -t} } \] 
\[b_{4} =\delta \left(\frac{x}{Ke^{a(T-T_{1} )} } ,\; T_{1} -t,\; \beta (T_{1} -t),\; T_{1} -t\right)=\frac{\ln \frac{x}{Ke^{a(T-T_{1} )} } -(r-q-a-\frac{\sigma ^{2} }{2} )(T_{1} -t)}{\sigma \sqrt{T_{1} -t} } \] 
\[\overline{d}_{2} =\delta \left(\frac{b}{x} ,\; T-2T_{1} +t,\; -\beta (T_{1} -t),\; T-t\right)=\frac{\ln \frac{b}{x} +(r-q-a-\frac{\sigma ^{2} }{2} )(T-t)}{\sigma \sqrt{T-t} }.\] 

Therefore the solution to the problem \eqref{eq22} and \eqref{eq24} is given as follows:
\[\begin{array}{l} {u(x,\; \; t)=[E\cdot e^{-r(T_{1} -t)} -R\cdot e^{-r(T-t)} ][N(b_{1} (x))-N(b_{2} (x))]} \\ {\; \; \; \; \; \; \; \; \;\; -(1-R)e^{-r(T-t)} [N_{2}(\overline{d}_1(x), b_{1}(x) ;\rho)-N_{2} (\overline{d}_1(x), b_{2}(x) ;\rho)]} \\ {\;\;\; \; \; \; \; \; \; \; +(1-R)e^{-r(T-t)} (x/b)^{1-2(r-q-a)/\sigma ^{2} } [N_{2} (\overline{d}_2(x), b_3(x) ;-\rho)-N_{2} (\overline{d}_2(x), b_4(x); \;-\rho)]} \end{array}.\] 
And the solution to the problem \eqref{eq22}, \eqref{eq23} and \eqref{eq24} is given by \eqref{eq25}, so we have
\[\begin{array}{l} {p(x,\; \; t)=[Ee^{-r(T_{1} -t)} -R\cdot e^{-r(T-t)} ]\{ N(b_1(x))-N(b_2(x))} \\ {\; \; \; \; \; \; \; \; \; \; \; \; \; \; \; \; \; \; \; \; \; \; \; \; \; \; \; \; \; \; \; \; \; \; \; \; \; \; \; \; \; \; \; \; \; \; \; -(x/b)^{1-2(r-q-a)/\sigma ^{2} } [N(b_{1} (b^{2}/x))-N(b_{2} (b^{2} /x))\} } \\ {\; \; -(1-R)e^{-r(T-t)} \{ N_{2}(\overline{d}_1(x), b_{1}(x) ;\rho)-N_{2}(\overline{d}_1(x), b_{2}(x) ;\rho)-} \\ {\; \; -(x/b)^{1-2(r-q-a)/\sigma ^{2} } [N_{2}(\overline{d}_1(b^{2} /x),\; \; b_{1} (b^{2} /x);\; \rho)-N_{2} (\overline{d}_{1} (b^{2} /x),\; \; b_{2} (b^{2} /x);\; \rho)]\} } \\ {\; \; +(1-R)e^{-r(T-t)} \{ (x/b)^{1-2(r-q-a)/\sigma ^{2} } [N_{2}(\overline{d}_2(x), b_{3}(x) ;-\rho)-N_{2}(\overline{d}_2(x), b_{4}(x);\; -\rho)-} \\ {\; \; \; \; \; \; \; \; \; \; \; \; \; \; \; \; \; \; \; \; \; \; \; \; -[N_{2} (\overline{d}_{2}(b^{2} /x),\; b_{3}(b^{2} /x)\; ; -\rho)-N_{2} (\overline{d}_{2}(b^{2} /x),\; \; b_{4} (b^{2} /x);-\rho)]\} .} \end{array}\] 
Here if we consider the relationships
\[x=Ve^{a(T-t)},\;\;\frac{x}{b}=\frac{V}{be^{-a(T-t)}},\;\;\frac{x}{Ke^{a(T-T_{1})}}=\frac{Ve^{a(T_{1}-t)}}{K}, \]
\noindent then we have
\[b_{1} (x)=\frac{\ln \frac{V}{be^{-a(T-t)} } +(r-q-a-\sigma ^{2} /2)(T_{1} -t)}{\sigma \sqrt{T_{1} -t} } :=b_{1} \] 
\[b_{2} (x)=\frac{\ln \frac{V}{K} +(r-q-\frac{\sigma ^{2} }{2} )(T_{1} -t)}{\sigma \sqrt{T_{1} -t} } :=b_{2} \] 
\[b_{1} (b^{2} /x)=\frac{\ln \frac{be^{-a(T-t)} }{V} +(r-q-a-\sigma ^{2} /2)(T_{1} -t)}{\sigma \sqrt{T_{1} -t} } :=\tilde{b}_{1} \] 
\[b_{2} (b^{2} /x)=\frac{\ln \frac{be^{-a(T-T_{1} )} }{K} \cdot \frac{be^{-a(T-t)} }{V} +(r-q-\sigma ^{2} /2)(T_{1} -t)}{\sigma \sqrt{T_{1} -t} } :=\tilde{b}_{2} \] 
\[\overline{d}_{1}(x)=\frac{\ln \frac{V}{be^{-a(T-t)} } +(r-q-a-\frac{\sigma ^{2} }{2} )(T-t)}{\sigma \sqrt{T-t} }(=d_{1} \;\text{in}\; \eqref{eq18}):=d\]
\[\overline{d}_{1}(b^2/x)=\frac{\ln \frac{be^{-a(T-t)} }{V} +(r-q-a-\frac{\sigma ^{2} }{2} )(T-t)}{\sigma \sqrt{T-t} }(=d_{2} \; \text{in}\; \eqref{eq18}):=\tilde{d}\]
\[\overline{d}_{2}(x)=\frac{\ln \frac{be^{-a(T-t)} }{V} +(r-q-a-\frac{\sigma ^{2} }{2} )(T-t)}{\sigma \sqrt{T-t} }(=d_{2} \; \text{in}\; \eqref{eq18}=\tilde{d})\]
\[b_{3} (x)=\frac{\ln \frac{V}{be^{-a(T-t)} } -(r-q-a-\frac{\sigma ^{2} }{2} )(T_{1} -t)}{\sigma \sqrt{T_{1} -t} } (=-\tilde{b}_{1})\]
\[b_{4}(x)=\frac{\ln \frac{Ve^{a(T -t)} }{Ke^{a(T-T_1)}} -(r-q-a-\frac{\sigma ^{2} }{2} )(T_{1} -t)}{\sigma \sqrt{T_{1} -t} } :=-\tilde{b}_5\]
\[b_{3} (b^{2} /x)=\frac{\ln \frac{be^{-a(T-t)} }{V} -(r-q-a-\frac{\sigma ^{2} }{2} )(T_{1} -t)}{\sigma \sqrt{T_{1} -t} } =-b_{1} \] 
\[b_{4} (b^{2} /x)=\frac{\ln \frac{be^{-a(T-T_{1} )} }{K} \cdot \frac{be^{-a(T-t)} }{V} -(r-q-a-\frac{\sigma ^{2} }{2} )(T_{1} -t)}{\sigma \sqrt{T_{1} -t} } :=-b_{5} \] 
\[\overline{d}_{2} (b^{2} /x)=\frac{\ln \frac{V}{be^{-a(T-t)} } +(r-q-a-\frac{\sigma ^{2} }{2} )(T-t)}{\sigma \sqrt{T-t} }=d_{1}\; \text{in}\; \eqref{eq18}=d\] 
and 
\begin{align}
P(V,\; t)&=p(Ve^{a(T-t)} ,\; \; t) \nonumber \\ 
&=[E\cdot e^{-r(T_{1} -t)} -R\cdot e^{-r(T-t)} ]\{ N(b_{1} )-N(b_{2} )-\left(\frac{V}{be^{-a(T-t)} } \right)^{1-2(r-q-a)/\sigma ^{2} } [N(\tilde{b}_{1})-N(\tilde{b}_{2})]\} \nonumber \\
&-(1-R)e^{-r(T-t)} \Big\{N_{2} (d ,\; b_{1} ;\;\rho)-N_{2} (d ,\; b_{2}; \;\rho)-\left(\frac{V}{be^{-a(T-t)} } \right)^{1-2(r-q-a)/\sigma ^{2} } \cdot \nonumber \\
& \; \; \; \; \; \; \;\; \; \; \; \; \; \; \; \; \;\cdot [N_{2} (\tilde{d} ,\; \tilde{b}_{1} ;\;\rho)-N_{2} (\tilde{d} ,\; \tilde{b}_{2} ;\;\rho)]\; +[N_{2} (d ,\;  -{b}_{1} \; ;\; -\rho)-N_{2} (d ,\; \; -{b}_{5} \; ;\; -\rho)] \nonumber \\ 
&\; \; \; \; \; \; \;\; \; \; \; \; \; \; \; \; \;-\left(\frac{V}{be^{-a(T-t)} } \right)^{1-2(r-q-a)/\sigma ^{2} } [N_{2} (\tilde{d} ,\; -\tilde{b}_{1} ;\;-\rho)-N_{2} (\tilde{d} ,\; -\tilde{b}_{5} ;\;-\rho)]\Big\}.\label{eq26}  
\end{align} 

Thus we have proved the following theorem.

\begin{theorem} Assume that $e^{-r(T-T_{1} )} R<E<e^{-r(T-T_{1} )} $. The price for bond put option (the solution to the problem \eqref{eq19}, \eqref{eq20}, \eqref{eq21}) is given by \eqref{eq26}.
\end{theorem}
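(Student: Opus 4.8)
The plan is to follow the constructive route already prepared by the problems \eqref{eq19}--\eqref{eq21}: reduce the moving-barrier problem to a constant-barrier problem, apply the image solution method to strip off the boundary condition, and then evaluate the resulting unbounded terminal value problem term by term using Theorem 1. First I would justify the terminal payoff rewrite. Since $B(\cdot,T_{1})$ is strictly increasing by Theorem 3 with $\inf_{V}B(V,T_{1})=e^{-r(T-T_{1})}R$ and $\sup_{V}B(V,T_{1})=e^{-r(T-T_{1})}$, the hypothesis $e^{-r(T-T_{1})}R<E<e^{-r(T-T_{1})}$ guarantees a unique early-redemption boundary $K\ge be^{-a(T-T_{1})}$ with $B(V,T_{1})<E\Leftrightarrow V<K$. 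This converts $(E-B(V,T_{1}))^{+}$ into $(E-B(V,T_{1}))\cdot 1\{V<K\}$ and, via \eqref{eq16}--\eqref{eq18}, into the explicit payoff \eqref{eq21}.

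Next I would introduce the variable $x=Ve^{a(T-t)}$, under which the moving barrier $V=V_{b}(t)=be^{-a(T-t)}$ becomes the fixed level $x=b$ and the operator in \eqref{eq19} is carried to the Black-Scholes operator in \eqref{eq22} with dividend rate $q+a$; this turns the problem into the constant-barrier problem \eqref{eq22}--\eqref{eq24}. Because the barrier is now constant, the image solution method of \cite{Bu2001} applies and yields \eqref{eq25}, expressing $p$ through the solution $u$ of the \emph{unbounded} terminal value problem \eqref{eq22}, \eqref{eq24} and through its reflection $u(b^{2}/x,t)$.

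The core computation is the evaluation of $u$. I would split the payoff \eqref{eq24} as $f_{1}+f_{2}+f_{3}$, write each indicator $1\{b<x<Ke^{a(T-T_{1})}\}$ as the difference $1\{x>b\}-1\{x>Ke^{a(T-T_{1})}\}$, and apply Theorem 1 to each piece with parameters matched to its structure: $(\beta,i,L)=(0,0,0)$ for the constant term $f_{1}$ (which collapses $N_{2}$ to a single $N$ via $N_{2}(+\infty,a_{2};\rho)=N(a_{2})$), $(\beta,i,L)=(0,1,b)$ for the $N(d_{1})$ term $f_{2}$, and $(\beta,i,L)=(1-2(r-q-a)/\sigma^{2},-1,b^{-1})$ for the $N(d_{2})$ term $f_{3}$, in every case with interest rate $r$ and dividend $q+a$. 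Checking that $\mu(\beta)=-r$ for the last choice and that the $\delta$-arguments produced by Theorem 1 reduce to $b_{1},b_{2},\overline{d}_{1},\overline{d}_{2},b_{3},b_{4}$ is the bookkeeping heart of the argument. Summing gives $u(x,t)$, \eqref{eq25} gives $p(x,t)$, and substituting back $x=Ve^{a(T-t)}$ together with the listed reductions of the reflected arguments (e.g. $b_{3}(b^{2}/x)=-b_{1}$ and $\overline{d}_{2}(b^{2}/x)=d$) yields \eqref{eq26}.

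The main obstacle I expect is not any single integral but the exact matching of the many arguments and correlation signs. In particular, verifying that the reflection $x\mapsto b^{2}/x$ combined with the power prefactor $(x/b)^{1-2(r-q-a)/\sigma^{2}}$ reproduces precisely the tilded arguments with the advertised identities, and that the correlation switches from $\rho$ to $-\rho$ exactly in the $f_{3}$-block (because there $i=-1$, so $\rho=i\sqrt{\tau/(\tau_{3}+i^{2}\tau)}$ flips sign), requires careful tracking, since a sign error anywhere would propagate into a wrong bivariate-normal term. I would therefore organize the verification as a table of the six base arguments and their reflected partners, confirm each against the definition of $\delta$ and of $\rho=\sqrt{(T_{1}-t)/(T-t)}$, and only then assemble \eqref{eq26}.
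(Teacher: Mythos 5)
Your proposal is correct and follows essentially the same route as the paper's own proof: the monotonicity of $B(\cdot,T_{1})$ from Theorem 3 to define the boundary $K$ and rewrite the payoff as \eqref{eq21}, the substitution $x=Ve^{a(T-t)}$ to obtain the constant-barrier problem \eqref{eq22}--\eqref{eq24}, the image method \eqref{eq25}, and the decomposition $f_{1}+f_{2}+f_{3}$ with exactly the same parameter choices $(\beta,i,L)$ in Theorem 1, including the sign flip of $\rho$ in the $f_{3}$-block. Your anticipated bookkeeping of the reflected arguments (e.g.\ $b_{3}(b^{2}/x)=-b_{1}$, $\overline{d}_{2}(b^{2}/x)=d$) is precisely the computation the paper carries out to assemble \eqref{eq26}.
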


\begin{remark} Assumption 6 gives the holder of the bond with expiry date $T$ the right to receive early repayment of debts at the date $T_{1} $ prior to the expiry date. The repayment $E$ must satisfy $e^{-r(T-T_{1} )} R<E<e^{-r(T-T_{1} )} $. \eqref{eq26} is just the early redemption premium which the holder should pay more due to this early redemption provision. Thus the price in the time interval $[0,\; \; T_{1} ]$ of the \textit{puttable bond} (the \textit{bond with early redemption provision}) is the sum of $B$(given by \eqref{eq16}) and $P$(given by \eqref{eq26}), and the price in the time interval $(T_{1} ,\; \; T]$ is equal to $B$.
\end{remark}

Next, we assume that the bond issuing company added to the bond defined by the Assumptions 1$\mathrm{\sim}$5 a new provision that allows the company to pay back the debt early at a predetermined date prior to the maturity.

\textbf{Assumption 7}: At a predetermined date , the bond issuing company can give $E$(unit of currency) and call the bond.

By this provision, the bond issuing company has the right to purchase this bond defined by the Assumptions 1$\mathrm{\sim}$5 for the exercise price $E$ at time $T_{1} $. Thus the Assumption 7 provides the company the bond (call) option with maturity $T_{1} $. This is a bond option on zero coupon bond with credit risk. If the default event occurs, then the bond holder receives the default recovery and the bond contract is ended, and thus the bond option contract is ended, too. Therefore this bond option contract is a kind of barrier option contract and its price $P=C(V,\; \; t)(0\le t\le T_{1} )$ satisfies \eqref{eq19} and \eqref{eq20}, and we have the following terminal value condition
\[P_{T_{1} } =C(V,\; T_{1} )=(B(V,\; T_{1} )-E)^{+} ,\; \; V>V_{b} (T_{1} )=be^{-a(T-T_{1} )} .\] 
Here $B(V,\; T_{1} )$is given by \eqref{eq16} and thus there exists a unique number $K\ge V_{b} (T_{1} )=be^{-a(T-T_{1} )} $such that 

\[B(V,\; T_{1} )<E \;\;\text{if}\; V<K,\;\;\; B(V,\; T_{1} )\ge E \;\;\text{if} \;V\ge K.\]

\noindent And the terminal value condition of the option is written by 
\[C(V,\; T_{1} )=B(V,\; T_{1} )\cdot 1\{ V\ge K\} -E\cdot 1\{ V\ge K\} . \] 
Thus from \eqref{eq16}$\mathrm{\sim}$\eqref{eq18}, we have
\begin{equation} \label{eq27} 
\begin{array}{l} {C(V,\; T_{1} )=[R\cdot e^{-r(T-T_{1} )} -E]\cdot 1\{ V\ge K\} +(1-R)e^{-r(T-T_{1} )} N(d_{1} )\cdot 1\{ V\ge K\} } \\ {\; \; \; \; \; \; \; \; \; \; \; \; \; \; -(1-R)e^{-r(T-T_{1} )} \left(\frac{V}{be^{-a(T-T_{1} )} } \right)^{1-\frac{2(r-b-a)}{\sigma ^{2} } } N(d_{2} )\cdot 1\{ V\ge K\} } \end{array} 
\end{equation} 

Using the same method of Theorem 4, we can obtain the following theorem.

\begin{theorem} If $e^{-r(T-T_{1} )} R<E<e^{-r(T-T_{1} )} $, then the price of the bond (call) option (the solution to the problem \eqref{eq19}, \eqref{eq20} and \eqref{eq27}) is represented as follows:
\begin{align}
P(V,\; t)&=p(Ve^{a(T-t)} ,\; \; t) \nonumber \\
&=[R\cdot e^{-r(T-t)} -E\cdot e^{-r(T_{1} -t)} ]\left[N(b_{2} )-\left(\frac{V}{be^{-a(T-t)} } \right)^{1-2(r-q-a)/\sigma ^{2} } N(\tilde{b}_{2} )\right]+\nonumber \\ 
&+(1-R)e^{-r(T-t)} \Big[N_{2} (d ,\; \; b_{2} ;\;\rho)-\left(\frac{V}{be^{-a(T-t)} } \right)^{1-2(r-q-a)/\sigma ^{2} } N_{2} (\tilde{d} ,\; \; \tilde{b}_{2};\;\rho)  \nonumber \\
&+N_{2} (d ,\; \; -{b}_{5};\;-\rho)-\left(\frac{V}{be^{-a(T-t)} } \right)^{1-2(r-q-a)/\sigma ^{2} } N_{2} (\tilde{d} ,\; -\tilde{b}_{5} ;\;-\rho)\Big].\label{eq28} 
\end{align} 
\end{theorem}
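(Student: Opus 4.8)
The plan is to follow the construction in the proof of Theorem 4 step for step, using that the call price solves the same equation \eqref{eq19} and the same knock-out condition \eqref{eq20} and differs only in its terminal data \eqref{eq27}. First I would apply the change of variables $x=Ve^{a(T-t)}$, $P(V,t)=p(x,t)$ already used for the put, which sends the moving barrier $V_b(t)$ to the fixed level $x=b$, turns the drift coefficient into $r-q-a$, and recasts the problem as a constant-barrier problem of the form \eqref{eq22}--\eqref{eq24}. The decisive simplification relative to the put is that the support $\{V\ge K\}$ in \eqref{eq27} becomes the single half-line $\{x\ge Ke^{a(T-T_1)}\}$, i.e.\ one ``greater-than'' threshold, whereas the put payoff carried the difference $1\{x>b\}-1\{x>Ke^{a(T-T_1)}\}$ of two thresholds. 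Consequently each term in \eqref{eq26} that originates from the lower threshold $x=b$ (the $b_1$- and $b_3$-type arguments) should drop out, leaving only the $K$-threshold quantities $b_2,b_5$ and their reflections $\tilde b_2,\tilde b_5$.

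Next I would split the transformed terminal value into the three summands $g_1,g_2,g_3$ of \eqref{eq27}, each being a product of a power function, a normal distribution function, and a characteristic function, so that Theorem 1 applies termwise. I would invoke Theorem 1 with exactly the parameter choices recorded for $f_1,f_2,f_3$, namely $(\beta,i,L)=(0,0,0)$ for $g_1$, $(0,1,b)$ for $g_2$, and $(1-2(r-q-a)/\sigma^2,-1,b^{-1})$ for $g_3$, all with interest-rate parameter $r$, dividend parameter $q+a$, and the single threshold $K=Ke^{a(T-T_1)}$, thereby reading off the solutions $u_1,u_2,u_3$ of \eqref{eq22} with terminal data $g_i$. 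The correlation is once more $\rho=\sqrt{(T_1-t)/(T-t)}$, entering with $+\rho$ in the piece $g_2$ (where $i=1$) and $-\rho$ in the piece $g_3$ (where $i=-1$), and the identity $\mu(\beta)=-r$ at $\beta=1-2(r-q-a)/\sigma^2$ collapses the $g_3$ discount factor to $e^{-r(T-t)}$.

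I would then enforce the barrier \eqref{eq20} by the image-solution formula \eqref{eq25}, $p(x,t)=u(x,t)-(x/b)^{1-2(r-q-a)/\sigma^2}u(b^2/x,t)$ applied to $u=u_1+u_2+u_3$, and finally return to $V$ via $x=Ve^{a(T-t)}$, inserting the argument reductions already tabulated in the proof of Theorem 4 (the quantities $b_2,\tilde b_2,b_5,\tilde b_5,d,\tilde d$ together with reflection identities such as $b_4(b^2/x)=-b_5$ and $\overline{d}_2(b^2/x)=d$). Collecting the surviving terms then yields \eqref{eq28}.

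The obstacle is clerical rather than conceptual, but it is real, and two points demand care. First, the coefficient signs in \eqref{eq27} are opposite to those of \eqref{eq21} and sit on the complementary region, so each sign must be carried through faithfully to reproduce the exact pattern of $+$ and $-$ in \eqref{eq28}. Second, because the power piece $g_3$ carries precisely the image exponent $1-2(r-q-a)/\sigma^2$, the prefactor $(x/b)^{1-2(r-q-a)/\sigma^2}$ coming from \eqref{eq25} cancels the power of $u_3(b^2/x)$ while surviving against $u_3(x)$; thus the image operation swaps which of the two $g_3$-terms bears the power prefactor, and tracking this swap is the step most prone to error. A convenient independent check is put--call parity: since both $C$ and $P$ vanish on the barrier and $C(V,T_1)-P(V,T_1)=B(V,T_1)-E$, the difference of \eqref{eq28} and \eqref{eq26} should equal the knock-out value at time $t$ of the claim paying $B(V,T_1)-E$ at $T_1$, which tests the whole computation at once.
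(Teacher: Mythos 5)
Your proposal is correct and follows exactly the route the paper intends: the paper's entire proof of this theorem is the remark that it is obtained ``using the same method of Theorem 4,'' and you have spelled out precisely that method (change of variables to the constant-barrier problem, termwise application of Theorem 1 with the same parameter triples $(\beta,i,L)$ but the single threshold $Ke^{a(T-T_1)}$, the image formula \eqref{eq25}, and the tabulated argument reductions). Your observations about the sign pattern, the power-prefactor swap under the image operation, and the put--call parity check are accurate and match what the computation in the paper's framework actually produces in \eqref{eq28}.
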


\begin{remark} Assumption 7 gives the bond issuing company(debtor) the right to call the bond from the bond holder at the date $T_{1} $ prior to expiry date with the early repayment $E$ and obliged the bond holder. \eqref{eq28} is just the premium that the company should pay to the holder due to this early repayment provision. So the price in the time interval $[0,\; \; T_{1} ]$ of this callable bond is equal to the difference of $P$(given by \eqref{eq26}) from $B$(given by \eqref{eq16}), and the price in the time interval $(T_{1} ,\; \; T]$ is equal to $B$.
\end{remark}

\textbf{Numerical Experiments}

When $r=0.04,\; \; q=0,\; \; \sigma =0.5\; ;\; \; b=100,\; \; a=0;\; \; R=0.7,\; \; T=2,\; \; T_{1} =1,\; \; E=0.9$, Figure 1 is the plot (t, B) of the price given by \eqref{eq16} of ordinary credit bond without early redemption provision. Figure 2 shows the plot (V, B) for firm value V at time T${}_{1}$ and 0 and the early redemption boundary. Figure 3 is the time plot (t, P) of the bond option price (early redemption premium) given by \eqref{eq26}, and Figure 4 is the graph (V, P) for firm value V at some fixed time. Figure 5 is the time plot of the price of the puttable bond (with early redemption provision).

\begin{center}
\noindent \includegraphics*[width=3.5in, height=2.5in, keepaspectratio=false]{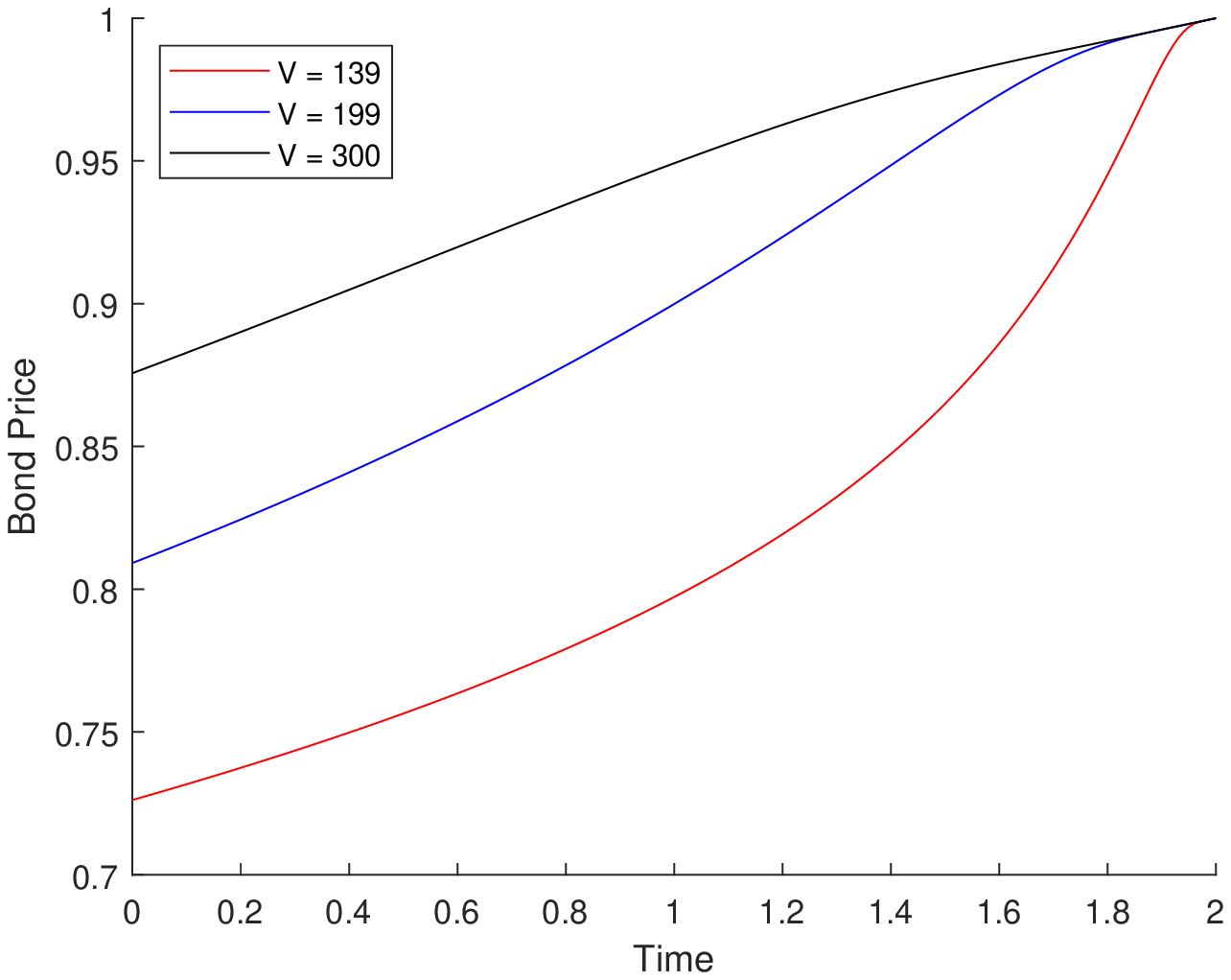}    

\noindent Figure 1. Plot (t, B) in [0, T] when V=\textbf{139}, \textbf{199}, \textbf{300;}
\end{center}
 
\begin{center}
\noindent \includegraphics*[width=3.5in, height=2.5in, keepaspectratio=false]{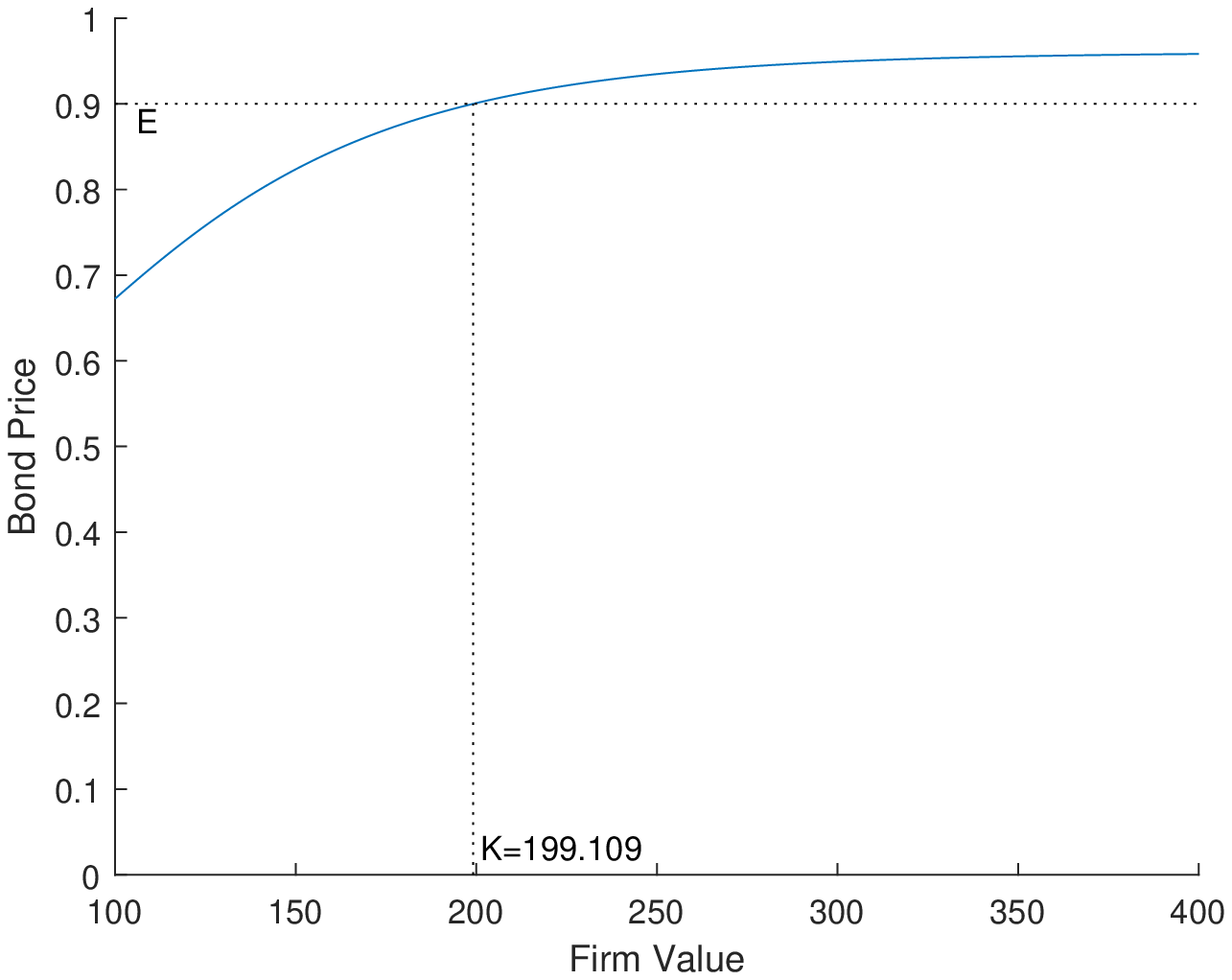}

\noindent Figure 2. Plot (V, B) when t =\textbf{T${}_{1}$${}_{\ }$}, early redemption boundary K=199.109
\end{center}

\begin{center}
\noindent \includegraphics*[width=3.5in, height=2.5in, keepaspectratio=false]{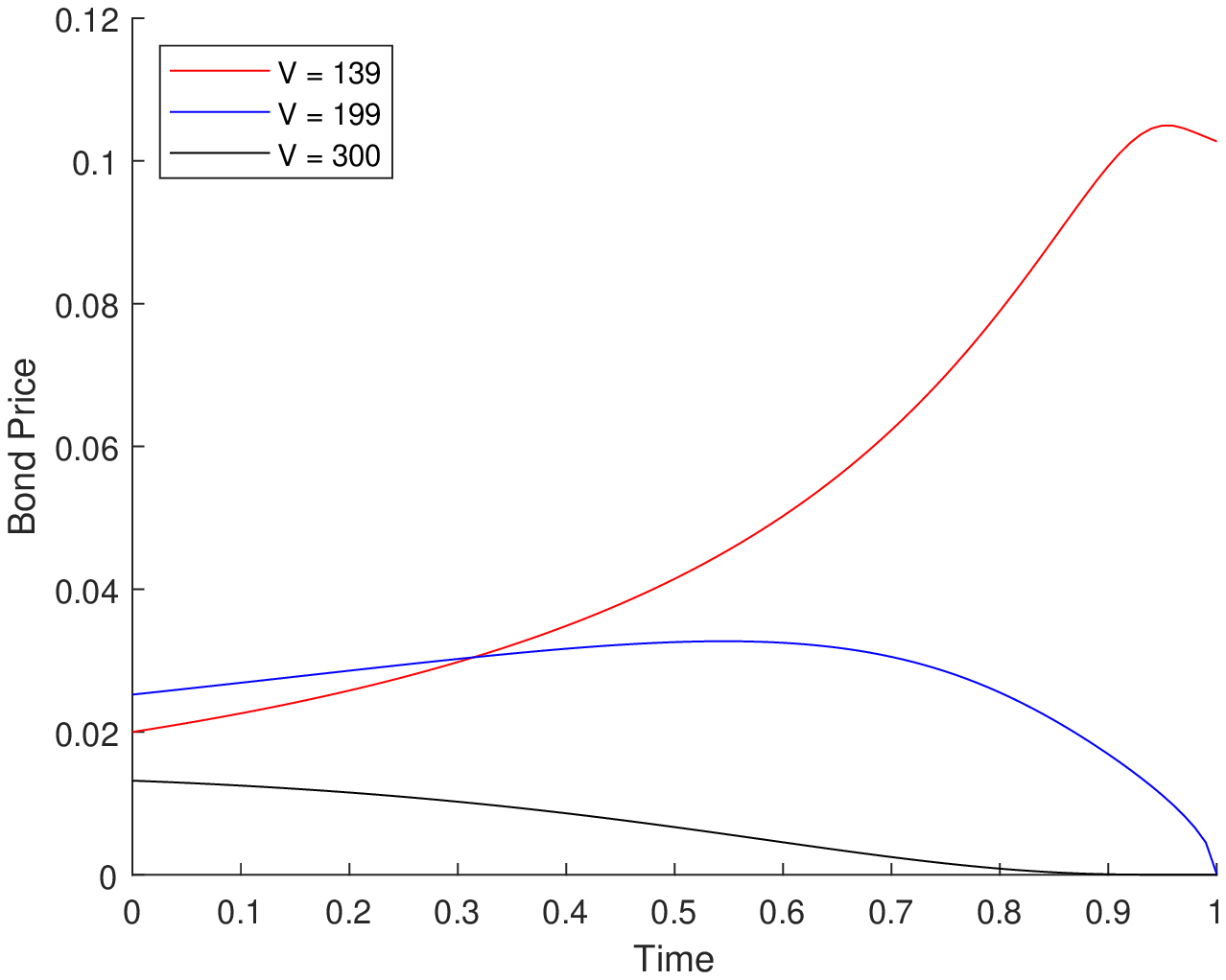}

\noindent Figure 3.  Plot (t, P) in [0, T${}_{1}$] when V=\textbf{139}, \textbf{199}, \textbf{300}
\end{center}

\begin{center}
\noindent \includegraphics*[width=3.5in, height=2.5in, keepaspectratio=false]{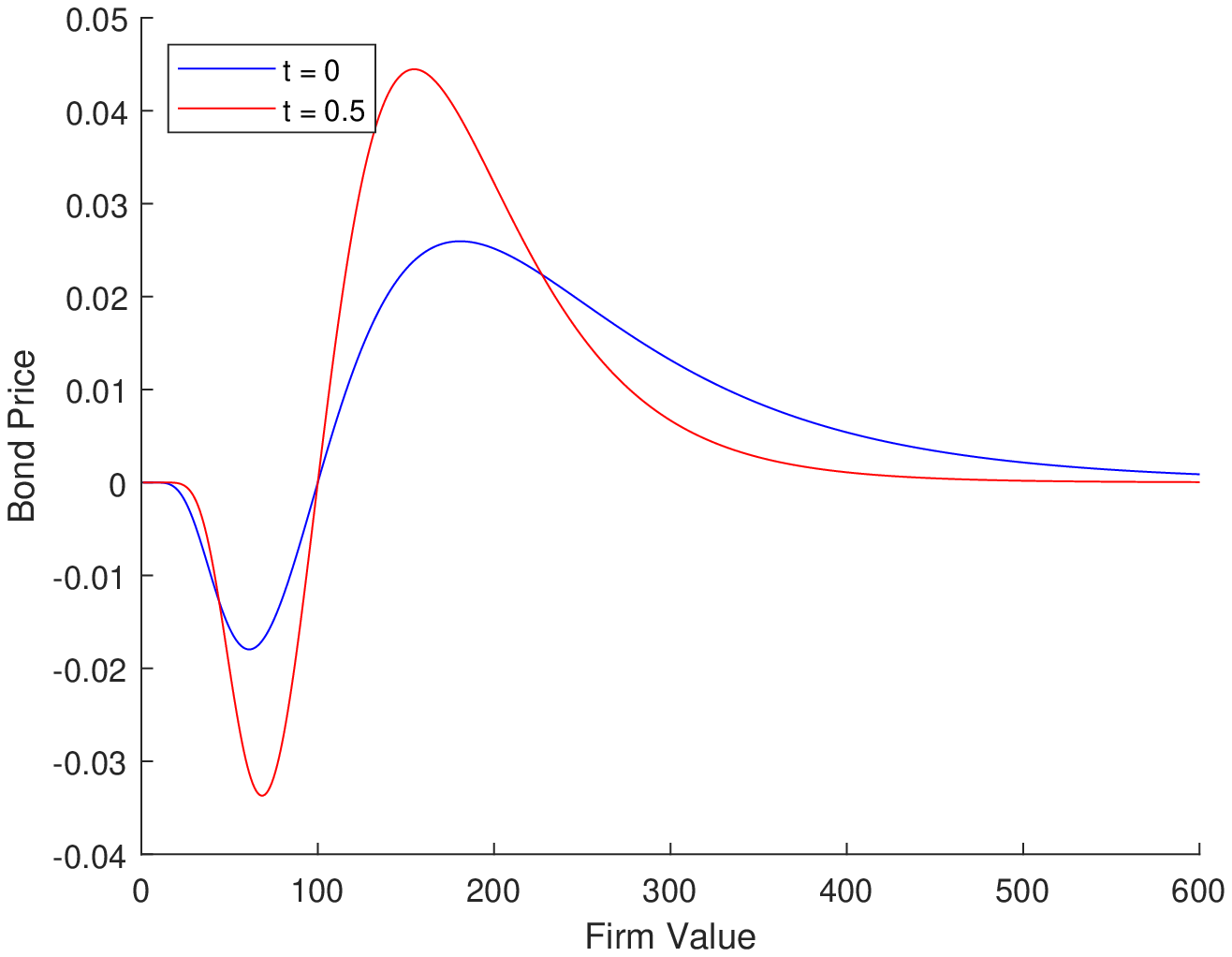}

\noindent Figure 4. Plot (V, P) when t = \textbf{0}, \textbf{T${}_{1}$/2}
\end{center}

\noindent 
\begin{center}
\includegraphics*[width=3.5in, height=2.5in, keepaspectratio=false]{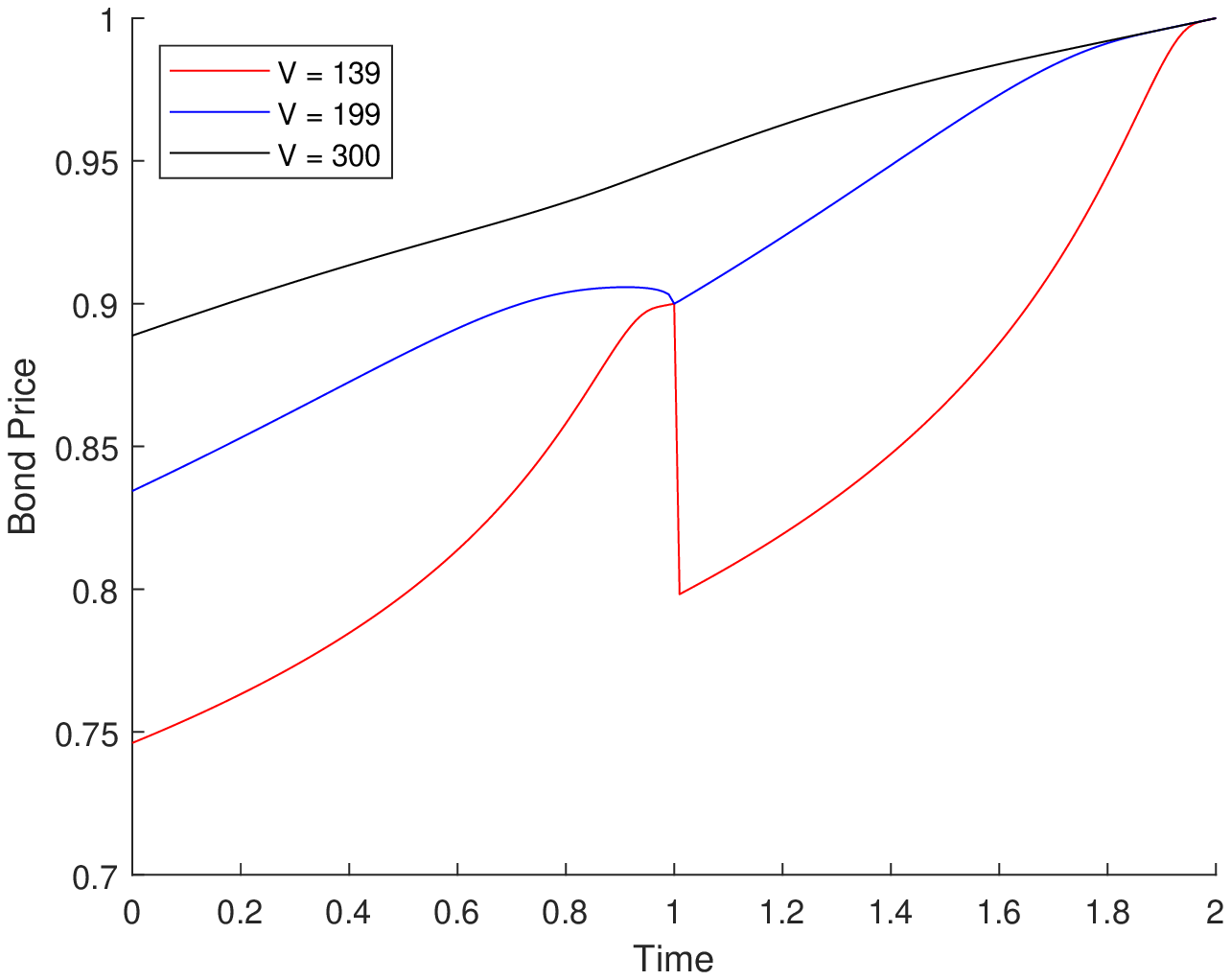}

\noindent Figure 5. Time Plot of the price of the puttable bond, (t, P+B) in [0, T${}_{1}$], (t, B) in [T${}_{1}$, T] when V=\textbf{139}, \textbf{199}, \textbf{300}
\end{center}

\noindent 

The results of the experiment show that the theoretical treatment in the above is correct. Figure 1 and Fig. 2 show that increasing property for firm value variable of the price function of ordinary credit bond without early redemption provision and uniqueness of early redemption boundary. Fig. 3 and Fig. 4 show the early redemption premium which the holder of puttable bond should pay more due to this early redemption provision. Fig. 5 shows that the price of our puttable bond increases with respect to the firm value variable.

\end{document}